\newcommand{\eqref}[1]{(\ref{#1})}
\newcommand{\RR}{\mathbb{R}}
\newcommand{\tmin}{\dot{\otimes}}
\newcommand{\tmax}{\hat{\otimes}}
\newcommand{\id}{\mathrm{id}}
\newcommand{\linspan}{\mathrm{span}}
\newcommand{\aff}{\mathrm{aff}}
\newcommand{\conv}{\mathrm{conv}}
\newcommand{\meas}{\mathcal{M}}
\newcommand{\Ha}{\mathcal{H}}
\newcommand{\dens}{\mathcal{D}}
\newcommand{\effect}{\mathcal{E}}
\newcommand{\bound}{\mathcal{B}}
\newcommand{\I}{\mathds{1}}
\newcommand*{\bra}[1]{\langle #1 |}
\newcommand*{\ket}[1]{| #1 \rangle \@ifnextchar\bra{\!}{}}
\newcommand*{\ketbra}[1]{| #1 \rangle \! \langle #1 |}
\newtheorem{theorem}{Theorem}
\newtheorem{proposition}{Proposition}
\newtheorem{lemma}{Lemma}
\newtheorem{corollary}{Corollary}
\theoremstyle{remark}
\newtheorem{example}{Example}
\theoremstyle{definition}
\newtheorem{definition}{Definition}
\begin{document}

\title[Incompatibility in restricted operational theories]{Incompatibility in restricted operational theories: connecting contextuality and steering}

\author{Martin Pl\'{a}vala}
\address{Naturwissenschaftlich-Technische  Fakult\"{a}t, Universit\"{a}t Siegen, 57068 Siegen, Germany}
\ead{martin.plavala@uni-siegen.de}

\begin{abstract}
We investigate the connection between steering and contextuality in general probabilistic theories. We show that for a class of bipartite states the steerability of the state by given set of measurements is equivalent to non-existence of preparation noncontextual hidden variable model for certain restricted theory constructed from the given state and measurements. The connection between steering and contextuality is provided by the concept of incompatibility in restricted theories, which we also investigate.
\end{abstract}


\section{Introduction}
One potential solution to the lack of intuitive interpretations of quantum theory is to assume that quantum theory is not fundamental, but that there is a hidden classical theory which can explain any prediction made by quantum theory. Such hidden classical theory is often called hidden variable model. Existence of hidden variable models was investigated in the past: it was shown by John Bell that local hidden variable models are incompatible with predictions of quantum theory \cite{Bell-ineq} and it was shown by Kochen and Specker \cite{KochenSpecker-hiddenVariables} that noncontextual hidden variable models are also incompatible with the predictions of quantum theory. Here noncontextual means that the probability of observing a measurement outcome in the hidden variable theory does not depend on other possible measurement outcomes. For a recent review of noncontextuality in quantum theory see \cite{BudroniCabelloGuhneKleinmann-contextuality}.

One possible operational version of contextuality was recently defined and discussed in \cite{SchmidSelbyWolfeKunjwalSpekkens-noncontextuality}. By operational we mean that instead of investigating the existence of noncontextual description of quantum theory, we investigate the existence of noncontextual description of some convex operational theory. The class of convex operational theories we have in mind are called general probabilistic theories, or GPTs for short \cite{Muller-review,Plavala-review}; they are large class of operational theories including both classical and quantum theories, but also theories that contain PR boxes \cite{PopescuRohrlich-PRbox,Barrett-GPTinformation} or postquantum steering \cite{CavalcantiSelbySikoraGalleySainz-Witworld}. Noncontextuality was investigated in GPTs before \cite{Spekkens-contextuality,Spekkens-toyTheory,ChiribellaYuan-contextuality,SchmidSpekkensWolfe-nonconIneq,SchmidSelbyPuseySpekkens-nonconModels,SchmidSelbySpekkens-causalInferentialTheories,GittonWoods-unitSeparability}. It is clear that some GPTs must have noncontextual hidden variable model; trivial example is any classical theory which is itself its own hidden variable model. Using the language of GPTs we can ask which theories apart from classical do allow given form of noncontextual hidden variable model. This is an important question as if it turns out that almost no theories allow for noncontextual hidden variable model, then one can interpret the result of Kochen and Specker as a mathematical statement about a class of non-classical theories, rather than as a result about quantum theory itself.

Incompatibility is one of the fundamental non-classical features of quantum theory; we say that two measurements are incompatible if they cannot be measured jointly. Incompatibility was heavily studied within quantum theory \cite{HeinosaariMiyaderaZiman-compatibility, GuhneHaapasaloKraftPellonpaaUola-incompatibility} but also in GPTs \cite{FilippovHeinosaariLeppajarvi-compatibility,Jencova-incomaptibility,BluhmJencovaNechita-incomaptibility}; it is for example known that incompatibility is necessary for steering \cite{UolaMoroderGuhne-steering,QuintinoVertesiBrunner-steering} and for violations of Bell inequalities \cite{BrunnerCavalcantiPironioScaraniWehner-BellNonlocality}.

In this paper we formulate and investigate different forms of noncontextuality and their connection to the notion of incompatibility in restricted GPTs, we show that there is an inherent connection between incompatibility and contextuality in all GPTs. Our results on compatibility generalize the results obtained in \cite{TavakoliUola-contextuality} and they are complementary to \cite{SelbySchmidWolfeSainzKunjwalSpekkens-contextuality} where it was shown that incompatibility is not necessary for the investigated type of contextuality. Our main result is that we apply the aforementioned connection to steering, where we show that steerability of certain class of states by given set of measurements directly corresponds to non-existence of preparation noncontextual hidden variable model for restricted theory constructed from the given state and measurements.

The paper is organized as follows: in Sec.~\ref{sec:GPT} we review the formalism of GPTs and in Sec.~\ref{sec:HVM} we introduce the three different types of noncontextual hidden variable models in GPTs. In Sec.~\ref{sec:incompat} we introduce two different types of incompatibility in restricted GPTs and we prove that there is a connection between incompatibility and non-existence of preparation noncontextual hidden variable. Finally in Sec.~\ref{sec:steering} we put all of the concepts together and we prove the connection between steerability of a state by given measurement and non-existence of preparation noncontextual hidden variable model.

\section{Restricted GPTs} \label{sec:GPT}
In this section we will introduce restricted general probabilistic theories (GPTs). This section is intended mainly to set the notation and introduce only the necessary concepts; for an exhaustive review of GPTs see \cite{Muller-review,Plavala-review}.

A state space $K$ is a compact convex subset of a finite-dimensional real vector space $V$. Given a state space $K$ we define the effect algebra $E(K)$ as the set of affine functions $f:K \to [0,1]$. An important element of $E(K)$ is the constant function $1_K \in E(K)$ given as $1_K(\rho) = 1$ for all $\rho \in K$. The interpretation is that $E(K)$ corresponds to all mathematically well-defined two-outcome measurements on $K$, i.e., for $f \in E(K)$ and $\rho \in K$ the number $f(\rho)$ is the probability of obtaining the `yes' outcome of the corresponding two-outcome measurement, the probability of `no' outcome is given by the normalization as $1-f(\rho)$.

Under reasonable assumptions one can show that we can use the tensor product to describe states shared between two parties. For a given state spaces $K_A$ and $K_B$ let $K_{AB}$ be the bipartite state space that contains all possible states that the two parties can share, then
\begin{equation} \label{eq:GPT-tensorSpan}
K_{AB} \subset \linspan(\{ \rho_A \otimes \rho_B: \rho_A \in K_A, \rho_B \in K_B \}).
\end{equation}
It is natural to assume that $K_{AB}$ contains all separable states, i.e., that we have $\rho_A \otimes \rho_B \in K_{AB}$ for all $\rho_A \in K_A$ and $\rho_B \in K_B$. This leads to the definition of the minimal tensor product
\begin{equation}
K_A \tmin K_B = \conv(\{\rho_A \otimes \rho_B : \rho_A \in K_A, \rho_B \in K_B\})
\end{equation}
and we then require $K_A \tmin K_B \subset K_{AB}$. We will also assume that $K_{AB}$ allows for all separable measurements, i.e., that for all $f_A \in E(K_A)$ and $f_B \in E(K_B)$ we have $(f_A \otimes f_B)(\rho_{AB}) \geq 0$. This leads to the definition  of the maximal tensor product
\begin{eqnarray}
K_A \tmax K_B = \{ &\psi \in \linspan(K_A \tmin K_B): (1_{K_A} \otimes 1_{K_B})(\psi) = 1, \nonumber \\
&(f_A \otimes f_B)(\psi) \geq 0, \forall f_A \in E(K_A), \forall f_B \in E(K_B) \}
\end{eqnarray}
and we then require $K_{AB} \subset K_A \tmax K_B$. Using linearity we can define the experiment where Alice measures but Bob doesn't: let $\rho_{AB} \in K_{AB}$, then using \eqref{eq:GPT-tensorSpan} we have $\rho_{AB} = \sum_{i=1}^n \alpha_i \sigma^A_i \otimes \sigma^B_i$ where $\alpha_i \in \RR$, $\sigma^A_i \in K_A$, $\sigma^B_i \in K_B$ for all $i \in \{1, \ldots, n\}$. Let $f_A \in E(K_A)$, we then define $(f_A \otimes \id_B)(\rho_{AB}) = \sum_{i=1}^n \alpha_i f_A(\sigma^A_i) \sigma^B_i$. Here $\id_B$ is the identity map on Bob's system.

\begin{example}[Classical theory]
A useful example of a state space is the classical theory where the state space is a simplex $S_n$. Here a simplex is a convex hull of linearly independent pure states, that is, $S_n = \conv(\{s_i\}_{i=1}^n)$ where $s_i$ are linearly independent vectors. The effect algebra $E(S_n)$ is then generated by the dual basis $b_j \in E(S_n)$, which is given by $b_j(s_i) = \delta_{ij}$ where $\delta_{ij}$ is the Kronecker symbol. We then have $1_{S_n} = \sum_{j=1}^n b_j$. Note that we will also use the following notation: let $I$ be any index set, e.g., $I = \{1, \ldots, n\}$, then we will use $S_I$ to denote the simplex generated by the states indexed by $I$, i.e., $S_I = \conv(\{s_i\}_{i \in I})$.
\end{example}

\begin{example}[Quantum theory]
Quantum theory is another example of a GPT. Here the state space is the set of all density matrices, i.e, we have $K = \dens(\Ha)$, where $\dens(\Ha) = \{ \rho \in \bound(\Ha): \rho \geq 0, \Tr(\rho) = 1 \}$, $\bound(\Ha)$ is the vector space of hermitian operators on the finite-dimensional complex Hilbert space $\Ha$ and $A \geq 0$ means that $A$ is positive semidefinite. We also have $E(\dens(\Ha)) = \effect(\Ha) = \{ M \in \bound(\Ha): 0 \leq M \leq \I \}$ up to an isomorphism; the value of the effect $M \in \effect(\Ha)$ on the state $\rho \in \dens(\Ha)$ is given by the Born rule $\Tr(\rho M)$. Let $K_A = \dens(\Ha_A)$ and $K_B = \dens(\Ha_B)$, then the bipartite state space is given by the tensor product of the underlying Hilbert spaces, that is $K_{AB} = \dens(\Ha_A \otimes \Ha_B)$.
\end{example}

Let $K_A$ and $K_B$ be state spaces, then a channel is an affine map $\Phi: K_A \to K_B$. A measurement is a channel where the target state space is a simplex, i.e., measurement is a channel $m: K_A \to S_n$. One can prove that every measurement $m$ corresponds to a tuple of effects $f_i \in E(K)$, $i \in \{1, \ldots, n\}$, such that $\sum_{i=1}^n f_i = 1_K$ and for $\rho \in K_A$ we have $m(\rho) = \sum_{i=1}^n f_i(\rho) s_i$. This is exactly the same as saying that $p(i)$, the probability of observing the outcome $i$, is given as $p(i) = f_i(\rho)$. Finally, a classical post-processing is a channel $\nu: S_n \to S_k$ that maps a simplex to another simplex. Every classical post-processing corresponds to a right stochastic matrix $\nu_{ij}$ , i.e., $\nu_{ij} \geq 0$ and $\sum_{j=1}^k \nu_{ij} = 1$, the correspondence is given as $\nu(s_i) = \sum_{j=1}^n \nu_{ij} s_j$ where $s_i$ is an extreme point of $S_n$.

Given a channel $\Phi: K_A \to K_B$ we can define the adjoint map $\Phi^*: E(K_B) \to E(K_A)$ as the unique channel such that $(\Phi^*(f))(\rho) = f(\Phi(\rho))$ holds for all $f \in E(K_B)$ and $\rho \in K_A$.

A restricted GPT is a theory in which not all mathematically well-defined states and measurements are allowed. Restricted GPTs arise in experimental scenarios, where we can usually prepare only a certain subset of all possible states and perform only a certain subset of all possible measurements. In order to describe the restricted theory we will use the pair $(K,E)$ where $K$ is a state space containing all preparable states and $E$ is the effect algebra containing all allowed effects, i.e., all allowed two-outcome measurements. Note that the pair $(K,E)$ does not completely describe the restricted theory, because we can also have independent higher-order restrictions on measurements with more than two outcomes \cite{FilippovGudderHeinosaariLeppajarvi-restrictions}. In order to also take into account the restrictions on the set of more than two-outcome measurements, we will denote $\meas(E)$ the set of all measurements allowed in the given restricted theory $(K,E)$. Going the other way, we can also define $E$ as the smallest effect algebra containing effects corresponding to measurements $m_x$, where $x$ is a some index; in this case we denote $E = \mathrm{effect}(m_x)$. If the set $\{ m_x \}$ is closed with respect to post-processing of measurements we get $\meas(\mathrm{effect}(m_x)) = \{ m_x \}$.

For practical reason we will consider a theory with no restrictions as a special case of a general restricted theory. Moreover we will abuse the notation and we will use $(K, E(K))$ to denote the theory with no restrictions.

Let $(K,E)$ be a restricted theory, then in order to make all probabilities positive we must require that $E \subset E(K)$, i.e., that every $f \in E$ is an affine function $f:K \to [0,1]$. Following the consistency results derived in \cite{FilippovGudderHeinosaariLeppajarvi-restrictions} we will also assume that $E$ is convex and $0_K, 1_K \in E$, where $0_K(\rho) = 0$ for all $\rho \in K$. One also has to consider whether $E$ contains enough measurements to distinguish any two states in $K$. Thus we arrive to the following definition:
\begin{definition}
Let $(K,E)$ be a restricted theory. We say that $(K,E)$ is tomographically complete if for every $\rho, \sigma \in K$, $\rho \neq \sigma$, there is $f \in E$ such that $f(\rho) \neq f(\sigma)$.
\end{definition}
It is reasonable to expect that in most practical applications the restricted theories $(K,E)$ will be tomographically complete. We will not assume topographical completeness because it is not needed for the validity of our results.

Given the restricted theory $(K,E)$ it is useful to define the state space of $E$ as the set of all linear functionals $\psi$ on $\linspan(E)$ such that $\psi$ is positive on $E$ and normalized, $\psi(1_K) = 1$. We get
\begin{equation}
S(E) = \{ \psi \in \linspan(E)^*: \psi(f) \geq 0, \forall f \in E, \psi(1_K) = 1 \}
\end{equation}
where $V^*$ is the dual of the vector space $V$. For a theory with no restrictions $(K, E(K))$ we have $S(E(K)) = K$ up to an isomorphism \cite{Plavala-review}. If $(K,E)$ is tomographically complete theory, then $K \subset S(E)$ up to an isomorphism. Finally we prove the following lemma which is going to be useful for working with tomographically incomplete restricted theories.
\begin{lemma} \label{lemma:GPTs-channelInRestricted}
Let $(K,E)$ be a restricted theory, then there is a channel $\Phi_{(K,E)}: K \to S(E)$ such that $m(\rho) = m(\Phi_{(K,E)}(\rho))$ for all $\rho \in K$ and for all measurements $m \in \meas(E)$.
\end{lemma}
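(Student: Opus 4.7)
The plan is to define $\Phi_{(K,E)}$ by the obvious evaluation map: send each state $\rho \in K$ to the linear functional on $\linspan(E)$ given by $f \mapsto f(\rho)$. This is well defined because each $f \in \linspan(E)$ is a specific affine function on $K$, and linearity in $f$ is automatic from the linearity of function evaluation, so $\Phi_{(K,E)}(\rho) \in \linspan(E)^*$.

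Next I would verify that $\Phi_{(K,E)}(\rho) \in S(E)$: positivity on $E$ is immediate from $f(\rho) \in [0,1]$ for every $f \in E$, and normalization is just $\Phi_{(K,E)}(\rho)(1_K) = 1_K(\rho) = 1$. I would then confirm that $\Phi_{(K,E)} : K \to S(E)$ is affine: for $\rho, \sigma \in K$, $t \in [0,1]$, and any $f \in \linspan(E)$,
\[
\Phi_{(K,E)}(t\rho + (1-t)\sigma)(f) = f(t\rho + (1-t)\sigma) = t f(\rho) + (1-t) f(\sigma),
\]
which agrees with $(t\Phi_{(K,E)}(\rho) + (1-t)\Phi_{(K,E)}(\sigma))(f)$. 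Hence $\Phi_{(K,E)}$ is a channel in the sense of the excerpt.

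For the measurement identity, I would unpack an arbitrary $m \in \meas(E)$ as a tuple of effects $f_1, \ldots, f_n \in E$ satisfying $\sum_{i=1}^n f_i = 1_K$, so that $m(\rho) = \sum_{i=1}^n f_i(\rho) s_i$ on $K$ and, via the dual pairing, $m(\psi) = \sum_{i=1}^n \psi(f_i) s_i$ on $S(E)$; this is a valid measurement on $S(E)$ because $\sum_i \psi(f_i) = \psi(1_K) = 1$ and each $\psi(f_i) \geq 0$ for $\psi \in S(E)$. Substituting $\psi = \Phi_{(K,E)}(\rho)$ collapses the right-hand side to $\sum_{i=1}^n f_i(\rho) s_i = m(\rho)$, which is the claim.

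The only real difficulty is notational: one must keep track of the two roles each $f \in E$ plays, as an affine function on $K$ and as a coordinate along which functionals in $\linspan(E)^*$ are evaluated. Once that bookkeeping is clear, no estimates, existence arguments, or nontrivial structural facts are required, and the entire proof reduces to three short verifications.
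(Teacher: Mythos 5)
Your proposal is correct and follows essentially the same route as the paper's own proof, which likewise defines $\Phi_{(K,E)}$ as the evaluation map $\rho \mapsto (f \mapsto f(\rho))$ and checks positivity, normalization, and affinity. You are in fact slightly more explicit than the paper in verifying the final identity $m(\Phi_{(K,E)}(\rho)) = m(\rho)$ by expanding $m$ into its effects, which the paper leaves implicit.
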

\begin{proof}
Let $f \in E$ and $\rho \in K$. By construction of the restricted theory $(K,E)$ there must be a way to evaluate $f$ on $\rho$, i.e., the expression $f(\rho)$ must be well-defined. We will now employ a standard trick: rather then seeing $f$ as the function and $\rho$ as the input we will exchange their roles and we will interpret $\rho$ as the function and $f$ as the input. This is done using the evaluation map $\xi: \rho \to \xi_\rho$, where $\xi_\rho: E \to \RR$ is a function defined as $\xi_\rho(f) = f(\rho)$. It is straightforward to prove that $\xi_\rho: E \to \RR$ is linear and so $\xi_\rho \in \linspan(E)^*$. Similarly we get $\xi_\rho(f) \geq 0$ for all $f \in E$, $\xi_\rho(1_K) = 1$ and that the evaluation map $\xi$ is affine. It thus follows that $\xi_\rho \in S(E)$ and $\xi = \Phi_{(K,E)}$ is the channel we were looking for.
\end{proof}
The interpretation of Lemma \ref{lemma:GPTs-channelInRestricted} is that we can formally interpret every measure-and-prepare experiment in the restricted theory $(K,E)$ as preparation of some state $\rho \in K$, applying the fixed channel $\Phi_{(K,E)}$ and measuring the final state $\Phi_{(K,E)}(\rho) \in S(E)$ using a measurement $m \in \meas(E)$. Note that while $\rho \in K$ and $m \in \meas(E)$ are given by the choice of the experiment, the channel $\Phi_{(K,E)}$ is always fixed by the structure of the restricted theory $(K,E)$. If $(K,E)$ is tomographically complete, then $K \subset S(E)$ and one usually gets $\Phi_{(K,E)} = \id$.

\section{Hidden variable models} \label{sec:HVM}
Given a restricted theory $(K,E)$ one can ask whether hidden variable models with certain properties exist; the properties in question are for example locality, superdeterminism \cite{DonadiHossenfelder-superdeterminism,HossenfelderPalmer-superdeterminism}, or noncontextuality. The hidden variable model is a classical theory that can model any preparation-measurement experiment performable in $(K,E)$. We will be interested in the following three different versions of noncontextual hidden variable models \cite{Spekkens-contextuality}: preparation noncontextual hidden variable models, measurement noncontextual hidden variable models and preparation and measurement noncontextual hidden variable models.

Consider the following scenario: an experimenter is given a description of the input data, according to the description the experimenter prepares the input state, measures and obtains statistics of the experiment. Let $S_I$ be the simplex that describes the classical input and $S_O$ the simplex that describes the classical output. The scenario in question can be described using the channels $P: S_I \to K$ and $m: K \to S_O$, where $P$ is the preparation map and $m$ is the measurement. Thus the whole experiment can be pictorially represented as
\begin{equation}
\begin{tikzcd}
S_I \arrow[d, "P"]     \\
K \arrow[d, "m"] \\
S_O                   
\end{tikzcd}
\end{equation}
The upside of including the input data and $S_I$ in out model is that the whole formulation is closer to being time-symmetric \cite{Hardy-timeSymmetry}. Note that $S_I$ contains the full information about the input of the state preparation which allows us to distinguish scenarios that are later indistinguishable. For example, assume that $K = \dens(\Ha)$, $\dim(\Ha) = 2$, is a two-level quantum system and let $S_I = \conv(\{ s_0, s_1, s_+, s_- \})$. Let $\ket{\pm} = \frac{1}{\sqrt{2}}(\ket{0}\pm\ket{1})$ and define the preparation map $P: S_I \to K$ as $P(s_0) = \ketbra{0}$, $P(s_1) = \ketbra{1}$, $P(s_+) = \ketbra{+}$, $P(s_-) = \ketbra{-}$. It then follows that $P \left( \frac{1}{2} s_0 + \frac{1}{2} s_1 \right) = P \left( \frac{1}{2} s_+ + \frac{1}{2} s_- \right)$ and so $\frac{1}{2} s_0 + \frac{1}{2} s_1$ and $\frac{1}{2} s_+ + \frac{1}{2} s_-$ lead to preparing indistinguishable states. But despite that $\frac{1}{2} s_0 + \frac{1}{2} s_1$ and $\frac{1}{2} s_+ + \frac{1}{2} s_-$ are perfectly distinguishable states of $S_I$. Hence at the level of $S_I$ one can distinguish preparation inputs that are undistinguishable at the level of $K$.

A hidden variable model for the scenario described by $P$ and $m$ is given by a simplex $S_\Lambda$, preparation $\eta_P: S_I \to S_\Lambda$ and measurement $\mu_m: S_\Lambda \to S_O$ such that the statistics of both experiments are the same, i.e., for all $s_i \in S_I$ we have $(m \circ P)(s_i) = (\mu_m \circ \eta_P)(s_i)$, which can be written in terms of probability distributions as
\begin{equation}
p(a|P,m, i) = \sum_{\lambda \in \Lambda} p(a|m, \lambda) p(\lambda|P, i)
\end{equation}
where $p(a|P,m, i)$ is the probability of obtaining an outcome $a$ of measurement $m$ when the input data is $i$ and preparation $P$ is used, $p(\lambda|P, i)$ is the probability of obtaining the value of the hidden variable $\lambda$ when modeling preparation $P$ with input data $i$ and $p(a|m, \lambda)$ is the probability of obtaining the outcome $a$ when the value of the hidden variable is $\lambda$ and when modeling the measurement $m$. Here $\eta_P$ and $\mu_m$ are essentially the maps $\eta_P: i \to p(\lambda|P,i)$ and $\mu_m: \lambda \to p(a|m,\lambda)$. Note that the preparation $\eta_P$ can depend on $P$ and the measurement $\mu_m$ cen depend on $m$, one can capture these dependencies by defining the maps $\eta: P \mapsto \eta_P$ and $\mu: m \mapsto \mu_m$. Pictorially we get
\begin{equation} \label{eq:HVM-freeHVM}
\begin{tikzcd}
S_I \arrow["P"']{d}[name=P]{} \arrow["\eta_P", bend left]{rd}[name=etaP]{}
\arrow["\eta",shorten >=10pt,Rightarrow,to path={(P) -- node[label=below:$\eta$] {} (etaP)}]{}
& \\
K \arrow["m"']{d}[name=m]{}
& S_\Lambda \arrow["\mu_m", bend left]{ld}[name=mum]{}
\arrow["\eta", shorten >=10pt,Rightarrow,to path={(m) -- node[label=above:$\mu$] {} (mum)}]{} \\
S_O
&
\end{tikzcd}
\end{equation}
where the double arrows denote supermaps, that is maps that map maps. The meaning of \eqref{eq:HVM-freeHVM} is that both paths from $S_I$ to $S_O$ lead to the same result. It is straightforward to see that the hidden variable model corresponding to \eqref{eq:HVM-freeHVM} always exists, for example one may construct such hidden variable model by calculating the outcomes of the experiment by hand, the hidden variable $\Lambda$ then corresponds to all possible (finite) calculations one can write down. This is not surprising as so far we did not require that the hidden variable model satisfies any properties except that $i$ and $a$ do not depend on $\lambda$ and it was already observed in \cite{KochenSpecker-hiddenVariables} that the hidden variable model should be required to have some additional properties in order to not be trivial.

One can require the hidden variable model to be preparation noncontextual, meaning that for all preparations $P: S_I \to K$ and all input data $s_i \in S_I$ the prepared distribution of hidden variables depends only on $P(s_i)$. This means that for any $s, s' \in S_I$ and $P,P': S_I \to K$ such that $P(s) = P'(s')$ we must have $\eta_P(s) = \eta_{P'}(s')$. It follows that one can define a map $\iota: K \to S_\Lambda$ as follows: let $P$ be a preparation $P:S_I \to K$, let $s \in S_I$ and denote $P(s_i) = \rho$, then we define $\iota(\rho) = \eta_P(s)$. This is well-defined because if there is a different preparation $P':S_I \to K$ and input data $s' \in S_I$ such that $P'(s') = \rho$, then $P'(s') = \rho = P(s)$ and so $\iota(\rho) = \eta_P(s) = \eta_{P'}(s')$. Moreover $\iota$ is affine map: let $\rho, \sigma \in K$ and let $P$ be a preparation such that $P(s) = \rho$ and $P(s') = \sigma$, we can always construct such preparation as a preprocessing of the preparations of $\rho$ and $\sigma$. Let $\lambda \in [0,1]$, we then have $P(\lambda s + (1-\lambda) s') = \lambda \rho + (1-\lambda) \sigma$ and
\begin{eqnarray}
\iota(\lambda \rho + (1-\lambda) \sigma) &= \eta_P(\lambda s + (1-\lambda) s') = \lambda \eta_P(s) + (1-\lambda) \eta_P(s') \nonumber \\
&= \lambda \iota(\rho) + (1-\lambda) \iota(\sigma).
\end{eqnarray}
Thus $\iota$ is a channel and we get the following definition:
\begin{definition}[Preparation noncontextuality]
A restricted theory $(K,E)$ is preparation noncontextual if there is a map $\iota: K \to S_\Lambda$ such that for any experiment given by preparation $P:S_I \to K$ and measurement $m:K \to S_O$ we have
\begin{equation} \label{eq:HVM-NCprepHVM-eq}
p(a|P,m,i) = \sum_{\lambda \in \Lambda} p(a|m, \lambda) b_\lambda( (\iota \circ P) (s_i)),
\end{equation}
where $b_\lambda \in E(S_\Lambda)$ is such that $b_\lambda(s_{\lambda'}) = \delta_{\lambda \lambda'}$. One can also denote $b_\lambda( (\iota \circ P) (s_i)) = p(\lambda|P(s_i))$, explicitly showing that the distribution of hidden variable $\lambda$ depends only on $P(s_i)$. \eqref{eq:HVM-NCprepHVM-eq} is the same as
\begin{equation} \label{eq:HVM-NCprepHVM-diag}
\begin{tikzcd}
S_I \arrow[d, "P"']
& \\
K \arrow["m"']{d}[name=m]{} \arrow[r, "\iota"]
& S_\Lambda \arrow["\mu_m", bend left]{ld}[name=mum]{}
\arrow["\eta", shorten >=10pt,Rightarrow,to path={(m) -- node[label=above:$\mu$] {} (mum)}]{} \\
S_O
&
\end{tikzcd}
\end{equation}
\end{definition}

Analogically one can require that the hidden variable model is measurement noncontextual, meaning that for all measurements $m: K \to S_O$ and all outcome labels $s_a \in S_O$ the distribution of the outcome data depends only on $f_a = m^*(b_a)$ and $\lambda$. This means that for any $b, b' \in E(S_O)$ and $m,m': K \to S_O$ such that $m^*(b) = (m')^*(b')$ we must have $\mu_m^*(b) = \mu_{m'}^*(b')$. It follows that there must exist a map $\kappa: E \to E(S_\Lambda)$ such that if $f = m^*(b)$, then $\kappa(f) = \mu_m(b)$, moreover one can show that $\kappa$ is affine in the same way as for $\iota$. We thus get the definition:
\begin{definition}[Measurement noncontextuality]
A restricted theory $(K,E)$ is measurement noncontextual if there is a map $\kappa^*: S_\Lambda \to S(E)$ such that for any experiment given by preparation $P:S_I \to K$ and measurement $m:K \to S_O$ we have
\begin{equation} \label{eq:HVM-NCmeasHVM-eq}
p(a|P,m,i) = \sum_{\lambda \in \Lambda} b_a((m \circ \kappa^*)(s_\lambda)) p(\lambda| P, i),
\end{equation}
where one can denote $b_a((m \circ \kappa^*)(s_\lambda)) = p(a|m^*(b_a))$. Diagramatic equivalent of \eqref{eq:HVM-NCmeasHVM-eq} is
\begin{equation} \label{eq:HVM-NCmeasHVM-diag}
\begin{tikzcd}
S_I \arrow["P"']{d}[name=P]{} \arrow["\eta_P", bend left]{rd}[name=etaP]{} \arrow["\eta",shorten >=10pt,Rightarrow,to path={(P) -- node[label=below:$\eta$] {} (etaP)}]{}
&
& \\
K \arrow["m"']{d}[name=m]{}
& S_\Lambda \arrow[r, "\kappa^*"]
& S(E) \arrow[lld, "m"] \\
S_O
&
&
\end{tikzcd}
\end{equation}
\end{definition}
The diagram \eqref{eq:HVM-NCmeasHVM-diag} is quite different from \eqref{eq:HVM-NCprepHVM-diag}, but one can make the diagrams similar by explicitly including the channel $\Phi_{(K,E)}$ given by Lemma \ref{lemma:GPTs-channelInRestricted} and treating $m$ as only measurement on $S(E)$.

One can also require both at the same time, i.e., one can require the hidden variable model to be both preparation noncontextual and measurement noncontextual. In this case we will say that $(K,E)$ is simplex-embeddable, as already defined in \cite{SchmidSelbyWolfeKunjwalSpekkens-noncontextuality}.
\begin{definition}[Simplex-embeddability]
A restricted theory $(K,E)$ is simplex-embeddable if there is a simplex $S_\Lambda$ and channels $\iota: K \to S_\Lambda$ and $\kappa^*: S_\Lambda \to S(E)$ such that for all $\rho \in K$ and $f \in E$ we have $(\kappa(f))(\iota(\rho)) = f(\rho)$, which is the same as
\begin{equation} \label{eq:HVM-SE-diag}
\begin{tikzcd}
S_I \arrow[d, "P"']
&
& \\
K \arrow[r, "\iota"] \arrow[d, "m"']
& S_\Lambda \arrow[r, "\kappa^*"]
& S(E) \arrow[lld, "m"] \\
S_O
&
&
\end{tikzcd}
\end{equation}
\end{definition}

\section{Two definitions of incompatibility in restricted operational theories} \label{sec:incompat}
Before defining incompatibility of measurements in restricted theories, lets remind ourselves of the definition used in theories with no restrictions. Although we will define compatibility of only two measurements, it is straightforward to extend the notion to finite or infinite amount of measurements.  Let $(K, E(K))$ be a theory satisfying no-restriction hypothesis and let $m_1$ and $m_2$ be two measurements corresponding to the sets of effects $f_i \in E(K)$ and $g_j \in E(K)$ respectively, where $i \in \{1, \ldots, n_1\}$, $j \in \{1, \ldots, n_2\}$ and $\sum_{i=1}^{n_1} f_i = \sum_{j=1}^{n_2} g_j = 1_K$. Then we say that $m_1$ and $m_2$ are compatible if there is a measurement $m$ given by the set of effects $h_{ij} \in E(K)$, $i \in \{1, \ldots, n_1\}$, $j \in \{1, \ldots, n_2\}$ such that
\begin{eqnarray}
f_i &= \sum_{j=1}^{n_2} h_{ij}, \label{eq:incompat-noresSum-fi} \\
g_j &= \sum_{i=1}^{n_1} h_{ij}. \label{eq:incompat-noresSum-gj}
\end{eqnarray}
Measurement $m$ satisfying \eqref{eq:incompat-noresSum-fi} and \eqref{eq:incompat-noresSum-gj} is called the joint measurement. The definition of compatibility is equivalent to requiring that there is a measurement $m$ given by the set of $h_\lambda$, $\lambda \in \Lambda$, where $\Lambda$ is some arbitrary index set, and conditional probability distributions $p(i|1, \lambda)$ and $p(j|2, \lambda)$ such that $f_i = \sum_{\lambda \in \Lambda} p(i|1, \lambda) h_{\lambda}$, $g_j = \sum_{\lambda \in \Lambda} p(j|2, \lambda) h_{\lambda}$, see \cite{FilippovHeinosaariLeppajarvi-compatibility} for a proof.

Given a restricted theory $(K,E)$ we have two options when defining compatibility of measurements: we either require that $h_{ij} \in E$ or we relax this to $h_{ij} \in E(K)$. One may be inclined to argue that $h_{ij} \in E$ is the correct option, as the joint measurement $m$ given by the effects $h_{ij}$ should be an allowed measurement, but, as we will see, the other option is also useful and quite interesting, since it implies that even though the joint measurement is not an allowed measurement, the available states from $K$ cannot witness \cite{CarmeliHeinosaariToigo-incWitness} the incompatibility of $m_1$ and $m_2$. Thus we obtain the following two definitions:
\begin{definition}[$E$-compatibility] \label{def:incompat-Ecompat}
Let $(K,E)$ be a restricted theory and let $m_1$ and $m_2$ be measurements given by $f_i \in E$ and $g_j \in E$ respectively, where $i \in \{1, \ldots, n_1\}$ and $j \in \{1, \ldots, n_2\}$. We say that $m_1$ and $m_2$ are $E$-compatible if there are $h_{ij} \in E$, $i \in \{1, \ldots, n_1\}$, $j \in \{1, \ldots, n_2\}$, such that \eqref{eq:incompat-noresSum-fi} and \eqref{eq:incompat-noresSum-gj} hold. The measurement $m$ will be called $E$-joint measurement of $m_1$ and $m_2$.
\end{definition}

\begin{definition}[$E(K)$-compatibility] \label{def:incompat-EKcompat}
Let $(K,E)$ be a restricted theory and let $m_1$ and $m_2$ be measurements given by $f_i \in E$ and $g_j \in E$ respectively, where $i \in \{1, \ldots, n_1\}$ and $j \in \{1, \ldots, n_2\}$. We say that $m_1$ and $m_2$ are $E(K)$-compatible if there are $h_{ij} \in E(K)$, $i \in \{1, \ldots, n_1\}$, $j \in \{1, \ldots, n_2\}$, such that \eqref{eq:incompat-noresSum-fi} and \eqref{eq:incompat-noresSum-gj} hold. The measurement $m$ will be called $E(K)$-joint measurement of $m_1$ and $m_2$.
\end{definition}

The following result is immediate:
\begin{lemma}
Let $(K,E)$ be a restricted theory and let $m_1$ and $m_2$ be measurements. If $m_1$ and $m_2$ are $E$-compatible, then they are also $E(K)$-compatible.
\end{lemma}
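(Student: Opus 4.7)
The plan is to observe that the two definitions differ only in the ambient effect algebra in which the joint effects $h_{ij}$ are required to live, and that by the standing assumption $E \subset E(K)$ (which was imposed in Sec.~\ref{sec:GPT} precisely to ensure that every $f \in E$ is an affine map $K \to [0,1]$). So the argument will be a one-line containment check.

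Concretely, I would start from the hypothesis of $E$-compatibility: by Definition \ref{def:incompat-Ecompat} there exist $h_{ij} \in E$ with $i \in \{1, \ldots, n_1\}$, $j \in \{1, \ldots, n_2\}$, satisfying \eqref{eq:incompat-noresSum-fi} and \eqref{eq:incompat-noresSum-gj}. Then I would immediately invoke $E \subset E(K)$ to upgrade the membership statement to $h_{ij} \in E(K)$, while noting that the marginal identities \eqref{eq:incompat-noresSum-fi} and \eqref{eq:incompat-noresSum-gj} are unchanged because they are equalities of effects and do not depend on which algebra one regards the $h_{ij}$ as belonging to. The same family $\{h_{ij}\}$ therefore serves as an $E(K)$-joint measurement in the sense of Definition \ref{def:incompat-EKcompat}, which gives $E(K)$-compatibility.

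There is no real obstacle here; the content of the lemma is purely that strengthening the set from which the joint effects are drawn can only make compatibility harder, so the weaker notion follows from the stronger. The only mild point to keep in mind is that the sums $\sum_j h_{ij}$ and $\sum_i h_{ij}$ producing $f_i$ and $g_j$ are computed in $\linspan(E) \subset \linspan(E(K))$, so no reinterpretation of the equations is needed when passing from $E$ to $E(K)$.
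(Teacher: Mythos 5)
Your proof is correct and is exactly the paper's argument: the paper's own proof consists of the single line that the claim follows from $E \subset E(K)$, which is the containment check you spell out in more detail. No further comment is needed.
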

\begin{proof}
The proof follows from $E \subset E(K)$.
\end{proof}

\begin{figure}
\centering
\includegraphics[width=0.4\textwidth]{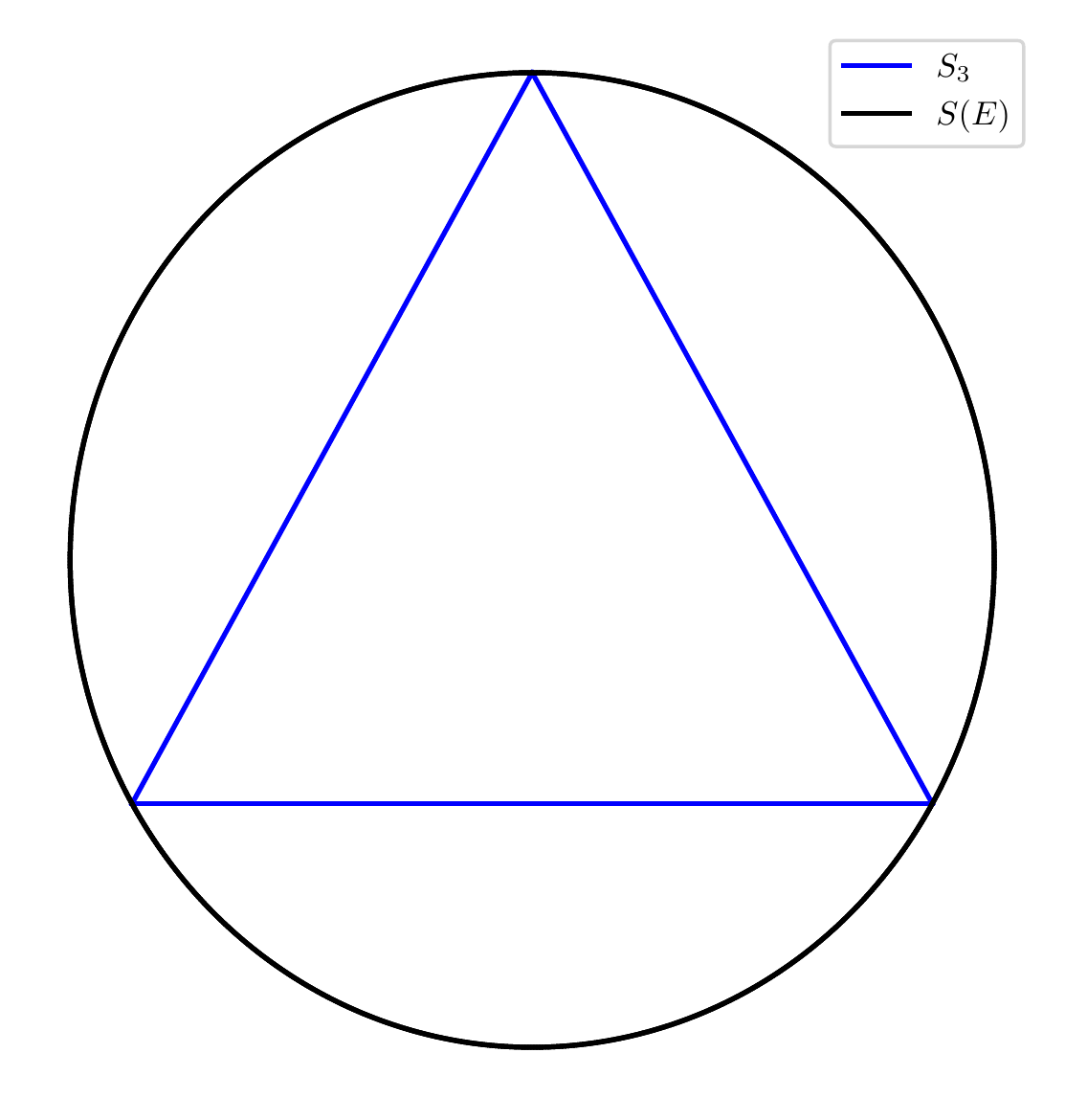}
\caption{Embedding of the classical state space $S_3$ inside the circle $S(E)$ used in Example~\ref{exm:incompat-triangleInCircle}.}
\label{fig:incompat-triangleInCircle}
\end{figure}

It is straightforward to construct a restricted theory $(K,E)$ and measurements $m_1$ and $m_2$ that are $E(K)$-compatible, but $E$-incompatible.
\begin{example}[Triangle-in-circle] \label{exm:incompat-triangleInCircle}
Let $K = S_3$ be a triangle and let $S(E)$ be a circumscribed circle of $K$, see Figure~\ref{fig:incompat-triangleInCircle}. Then $S(E)$ coincides with the state space of the real quantum theory and so there clearly are $E$-incompatible measurements $m_1$ and $m_2$. But since $K = S_3$ is a simplex, it follows that all measurements are $E(K)$-compatible.
\end{example}

In the following we will investigate what happens if all measurements in a restricted theory $(K,E)$ are either $E$-compatible or $E(K)$-compatible and we will compare these concepts to simplex-embeddability. We will take this path because, as we will see later, exactly these concepts become crucial when investigating the steerability of a bipartite state. Also note that we are not loosing too much generality by taking this path as when investigating the compatibility of measurement $m_1, \ldots, m_N$ one can always take $E$ to be the smallest effect algebra including all effects corresponding to the measurements $m_1, \ldots, m_N$. The first result is an immediate application of known results on theories with no-restrictions.
\begin{proposition} \label{prop:incompat-Ecompat-SEsimplex}
Let $(K,E)$ be a restricted theory. All measurements in $(K,E)$ are $E$-compatible if and only if $S(E)$ is a simplex.
\end{proposition}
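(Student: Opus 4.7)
The plan is to reduce Proposition~\ref{prop:incompat-Ecompat-SEsimplex} to the classical theorem characterising simplex state spaces via measurement compatibility \cite{FilippovHeinosaariLeppajarvi-compatibility}: a finite-dimensional state space $L$ is a simplex if and only if every finite collection of measurements on $L$ with effects in $E(L)$ admits a joint measurement. I apply this result to $L = S(E)$. The bridge between $(K,E)$ and $S(E)$ is the linear evaluation embedding $E \hookrightarrow E(S(E))$, $f \mapsto \hat f$ with $\hat f(\psi) = \psi(f)$, together with the channel $\Phi_{(K,E)}$ of Lemma~\ref{lemma:GPTs-channelInRestricted}. Since this embedding is linear and sends $1_K$ to $1_{S(E)}$, the marginal relations \eqref{eq:incompat-noresSum-fi}, \eqref{eq:incompat-noresSum-gj} translate verbatim between the two pictures.

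For the direction $(\Leftarrow)$, I would assume $S(E)$ is a simplex $S_n = \conv(\{s_k\}_{k=1}^n)$ with dual basis $\{b_k\}_{k=1}^n \subset E(S(E))$. Given measurements $\{f_i\}_{i=1}^{n_1}$ and $\{g_j\}_{j=1}^{n_2}$ in $(K,E)$, the classical construction on a simplex yields the candidate joint with $\hat h_{ij} = \sum_k f_i(s_k) g_j(s_k) b_k \in E(S(E))$, which automatically satisfies the required marginal relations. The delicate step is verifying that $h_{ij} \in E$ and not merely in $E(K)$. My plan is to exploit the fact that on the simplex $S(E)$ every effect in $E$ decomposes in the dual basis with non-negative coefficients; combined with $\sum_i f_i = \sum_j g_j = 1_K$ and the identity $\sum_{ij} h_{ij} = 1_K$, this should allow $h_{ij}$ to be recognised as a convex combination of admissible effects forced into $E$ by the hypothesis that $\{f_i\}$ and $\{g_j\}$ are themselves measurements in $(K,E)$.

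For the direction $(\Rightarrow)$, I would argue by contrapositive. If $S(E)$ is not a simplex, the classical theorem yields two measurements on $S(E)$ with effects in $E(S(E))$ admitting no joint. Using the fact that the image $\hat E$ spans the full space of affine functions on $S(E)$ (by construction of $S(E)$ as the dual to $\linspan(E)$), I would perturb those incompatible witnesses so that their effects lie in $\hat E$; pulling back through $\Phi_{(K,E)}$ then yields measurements in $(K,E)$ whose $E$-joint does not exist. The main obstacle in the whole argument is the $(\Leftarrow)$ claim that the joint effects lie in $E$ rather than only in $E(K)$, since this is precisely what distinguishes $E$-compatibility from $E(K)$-compatibility (cf.\ Example~\ref{exm:incompat-triangleInCircle}); making this precise requires using both the simplex structure of $S(E)$ and the admissibility constraints built into the definition of a measurement in $(K,E)$.
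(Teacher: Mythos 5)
Your high-level strategy---transporting everything to $S(E)$ via the evaluation embedding and invoking the classical characterisation of simplices by universal compatibility \cite{Plavala-simplex,Kuramochi-simplex}---is the same as the paper's, but there is a concrete gap in your $(\Rightarrow)$ direction. The paper argues directly, not by contrapositive: if all measurements are $E$-compatible, then every pair of two-outcome measurements with effects in $E$ has an $E$-joint, which is a well-defined measurement on $S(E)$; hence $(S(E),E)$ is a theory in which all pairs of two-outcome measurements are compatible, and the cited theorem immediately yields that $S(E)$ is a simplex. Your contrapositive route instead requires taking an incompatible pair with effects in $E(S(E))$ and ``perturbing'' it so that its effects land in $\hat E$. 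That step does not work as described: the fact that $\hat E$ spans the affine functions on $S(E)$ is a purely linear statement and gives no control over membership in the \emph{convex} set $E$; the only generic way to push an effect of $E(S(E))$ into a smaller convex $E$ containing $0_K$ and $1_K$ is to mix it towards trivial effects, and such noisy versions of an incompatible pair are in general compatible. So the witnesses surviving your perturbation need not be incompatible, and the contrapositive collapses.

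On $(\Leftarrow)$ you have correctly isolated the genuinely delicate point---that the joint effects must lie in $E$ and not merely in $E(S(E))$ or $E(K)$---and this is in fact a point the paper's own one-line proof silently assumes away. However, your proposed resolution does not close it: $\hat h_{ij}=\sum_k \hat f_i(s_k)\hat g_j(s_k)\,b_k$ is \emph{not} a convex combination of elements of $E$ (the coefficients do not sum to one, and the dual basis effects $b_k$ need not belong to $E$), so convexity of $E$ together with $\sum_i f_i=\sum_j g_j=1_K$ cannot by itself force $h_{ij}\in E$. Indeed, one can construct convex $E$ with $0_K,1_K\in E$ and $S(E)$ a simplex for which the unique $E(K)$-joint of two $E$-measurements has an effect outside $E$. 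Closing this direction requires an explicit closure assumption on $E$ beyond what you invoke---for instance closure under effect-algebra partial sums and complements, or outright $E=E(S(E))$, which is what the paper is implicitly using when it treats $(S(E),E)$ as ``a theory with no restrictions.'' As it stands, your $(\Leftarrow)$ argument ends exactly where the proof would have to begin.
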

\begin{proof}
If $S(E)$ is a simplex, then all measurements containing effects from $E$ are compatible and so all measurements in $(K,E)$ are $E$-compatible.

If all measurements are $E$-compatible, then it follows that all pairs of two-outcome measurements are $E$-compatible. Let $m_1$ and $m_2$ be a pair of two-outcome measurements, then their $E$-joint measurement $m$ is a well-defined measurement on $S(E)$. Then $(S(E),E)$ is a theory with no restrictions in which all pairs of two-outcome measurements are compatible. It follows that $S(E)$ is a simplex \cite{Plavala-simplex, Kuramochi-simplex}.
\end{proof}

The following result is similar to \cite[Theorem 1]{TavakoliUola-contextuality}.
\begin{proposition} \label{prop:incompat-EKcompat-NCprepHVM}
Let $(K,E)$ be a restricted theory. All measurements in $(K,E)$ are $E(K)$-compatible if and only if there exists a preparation noncontextual hidden variable model for $(K,E)$.
\end{proposition}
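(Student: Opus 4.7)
The proof is bidirectional, with the forward direction essentially constructive and the reverse direction requiring a universality argument.

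For the forward direction, suppose $(K,E)$ admits a preparation noncontextual hidden variable model, given by a channel $\iota: K \to S_\Lambda$ together with, for each measurement $m \in \meas(E)$ with outcomes in $O_m$ and effects $f_a$, conditional probabilities $p(a|m,\lambda)$ satisfying $f_a(\rho) = \sum_{\lambda} p(a|m,\lambda) b_\lambda(\iota(\rho))$, which is exactly \eqref{eq:HVM-NCprepHVM-eq} at the level of effects. For any two measurements $m_1, m_2 \in \meas(E)$ with effects $\{f_i\}$ and $\{g_j\}$, I define
$$
h_{ij}(\rho) \;=\; \sum_{\lambda \in \Lambda} p(i|m_1,\lambda)\, p(j|m_2,\lambda)\, b_\lambda(\iota(\rho)).
$$
Since $\iota$ is a channel, $b_\lambda \in E(S_\Lambda)$, and each product of conditional probabilities lies in $[0,1]$, the function $h_{ij}$ is affine $K \to [0,1]$, hence $h_{ij} \in E(K)$. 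Summing over $j$ and using $\sum_j p(j|m_2,\lambda) = 1$ recovers $f_i$, and symmetrically for $g_j$, verifying \eqref{eq:incompat-noresSum-fi} and \eqref{eq:incompat-noresSum-gj}.

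For the reverse direction, assume all measurements in $(K,E)$ are $E(K)$-compatible. I would construct $\iota$ as a ``universal refinement'' of $\meas(E)$, i.e., an $E(K)$-valued measurement of which every $m \in \meas(E)$ is a classical post-processing. For a finite subfamily $\{m_1,\ldots,m_N\} \subset \meas(E)$ the hypothesis directly supplies such a refinement: the joint $E(K)$-measurement has outcomes in $\Lambda = O_{m_1} \times \cdots \times O_{m_N}$, and the marginalization $\Lambda \to O_{m_k}$ is a stochastic post-processing recovering $m_k$. Taking this joint measurement as $\iota$ and reading off $p(a|m_k,\lambda)$ from the projections yields \eqref{eq:HVM-NCprepHVM-eq}.

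The main obstacle is passing from finite subfamilies to arbitrary $\meas(E)$, because $S_\Lambda$ is required to be a finite simplex so one cannot naively take the Cartesian product of all outcome sets. My strategy is to exploit finite-dimensionality: every effect in $E$ lives in $\linspan(E) \subset \linspan(E(K))$, which has finite dimension, so the directed system of joint $E(K)$-measurements over finite subfamilies of $\meas(E)$ (ordered by post-processing refinement) stabilises up to equivalence. By a Zorn-type maximality argument on this directed system, one obtains a maximal joint measurement $\iota: K \to S_\Lambda$ that refines every $m \in \meas(E)$. The classical post-processings witnessing these refinements supply the conditional probabilities $p(a|m,\lambda)$, completing the preparation noncontextual hidden variable model. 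Making this maximality/compactness argument fully rigorous is the delicate technical step; everything else is algebraic bookkeeping driven by \eqref{eq:HVM-NCprepHVM-eq}.
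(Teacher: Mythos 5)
Your first direction (hidden variable model implies compatibility) is essentially correct, though more roundabout than necessary: the paper simply observes that $\iota$ itself, with effects $h_\lambda = b_\lambda \circ \iota \in E(K)$, is a single parent measurement of which every $m \in \meas(E)$ is the post-processing $\mu_m$. Your product construction $h_{ij} = \sum_\lambda p(i|m_1,\lambda)p(j|m_2,\lambda)\, b_\lambda \circ \iota$ proves pairwise (and finite-family) compatibility, but note that the proposition, read with the paper's extension of Definitions~\ref{def:incompat-Ecompat} and \ref{def:incompat-EKcompat} to arbitrary families, asserts joint compatibility of the \emph{whole} collection $\meas(E)$, i.e.\ the existence of one parent $\{h_\lambda\}_{\lambda\in\Lambda}$ with $f_{a|x} = \sum_\lambda p(a|x,\lambda) h_\lambda$ for all $x$ simultaneously; for that, using $\iota$ directly is the clean route, since the product construction does not scale to infinitely many measurements.

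The genuine gap is in your second direction. You read the hypothesis as finite-subfamily compatibility and then try to bootstrap to a global parent via a ``Zorn-type maximality argument'' on the refinement order. This is the crux of that direction and you do not actually carry it out; worse, the claim that the directed system ``stabilises'' by finite-dimensionality of $\linspan(E(K))$ is not sound. The dimension bounds the effects, not the number of outcomes: refinements of parent measurements can keep acquiring outcomes indefinitely, maximal elements under post-processing refinement need not exist, and there are standard examples (e.g.\ the continuum of suitably noisy dichotomic qubit measurements) where every finite subfamily admits a joint measurement but no single finite-outcome parent for the whole family exists --- and the framework here forces $S_\Lambda$ to be a finite simplex in a finite-dimensional space. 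The paper avoids all of this because its definition of ``all measurements are $E(K)$-compatible'' for an infinite family is \emph{already} the existence of a single parent $\{h_\lambda\} \subset E(K)$ with conditional distributions $p(a|x,\lambda)$; the reverse direction is then a one-line construction, $\iota(\rho) = \sum_{\lambda\in\Lambda} h_\lambda(\rho)\, s_\lambda$ and $b_a(\mu_{m_x}(s_\lambda)) = p(a|x,\lambda)$, followed by verifying \eqref{eq:HVM-NCprepHVM-eq}. So either adopt that stronger (intended) reading of the hypothesis, in which case your limiting argument is unnecessary, or supply an actual proof of the finite-to-infinite passage, which in this generality you will not be able to do.
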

\begin{proof}
Let $\{ m_x \} = \meas(E)$, i.e., we index all measurements in $\meas(E)$ by the index $x$, and denote $f_{a|x}$ the effects corresponding to $m_x$. Since $m_x$ are $E(K)$-compatible, it follows that there are $h_\lambda \in E(K)$ such that $f_{a|x} = \sum_{\lambda \in \Lambda} p(a|x,\lambda) h_\lambda$. Let us define $\iota: K \to S_\Lambda$ by $\iota(\rho) = \sum_{\lambda \in \Lambda} h_\lambda(\rho) s_\lambda$ and $\mu_{m_x}:S_\Lambda \to S_O$ by $b_a(\mu_{m_x}(s_\lambda)) = p(a|x, \lambda)$. We then have
\begin{eqnarray}
p(a|P,m_x,i) &= f_{a|x}(P(s_i)) = \sum_{\lambda \in \Lambda} p(a|x,\lambda) h_\lambda(P(s_i)) \\
&= b_a((\mu_{m_x} \circ \iota)(P(s_i))).
\end{eqnarray}
Thus $\iota$ and $\mu: m_x \mapsto \mu_{m_x}$ constitute the preparation noncontextual hidden variable model.

Now assume that there is a preparation noncontextual hidden variable model for $(K,E)$. We then have that for every measurement $m:K \to S_O$ there are maps $\iota: K \to S_\Lambda$ and $\mu_m:S_\Lambda \to S_O$ such that $m = \mu_m \circ \iota$. Observe that $\iota$ is a measurement on $K$ corresponding to effects $h_\lambda \in E(K)$ and $\mu_m$ is just a classical post-processing, so any measurement $m$ in $(K,E)$ can be obtained as a classical post-processing $\mu_m$ of the measurement $\iota$. It follows that all measurements in $(K,E)$ are $E(K)$-compatible \cite[Subsection 5.2]{HeinosaariMiyaderaZiman-compatibility}.
\end{proof}

Finally, we get to describe the hierarchy between $E$-compatibility, simplex-embeddability and $E(K)$-compatibility.
\begin{theorem} \label{thm:incompat-hierarchy}
Let $(K,E)$ be a restricted theory. Then the following two implications hold:
\begin{enumerate}
\item\label{item:incompat-hierarchy-EtoSE} If all measurements on $(K,E)$ are $E$-compatible, then $(K,E)$ is simplex-embeddable.
\item\label{item:incompat-hierarchy-SEtoEK} If $(K,E)$ is simplex-embeddable, then all measurements on $(K,E)$ are $E(K)$-compatible.
\end{enumerate}
\end{theorem}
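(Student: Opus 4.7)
The plan is to treat the two implications separately, leveraging the results already proved so that neither half requires much new work.

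For (i), the starting point is Proposition \ref{prop:incompat-Ecompat-SEsimplex}: if every pair of measurements on $(K,E)$ is $E$-compatible, then $S(E)$ is automatically a simplex, so there is no choice left to make---I would take $S_\Lambda = S(E)$ as the ambient simplex of the embedding. The natural candidate for $\iota: K \to S_\Lambda$ is the channel $\Phi_{(K,E)}$ supplied by Lemma \ref{lemma:GPTs-channelInRestricted}, together with $\kappa^* = \id_{S(E)}$. Verifying simplex-embeddability then reduces to the identity $f(\Phi_{(K,E)}(\rho)) = f(\rho)$ for all $f \in E$ and $\rho \in K$, which is immediate from the definition of $\Phi_{(K,E)}$ as the evaluation map $\xi_\rho(f) = f(\rho)$.

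For (ii), I would work directly from the data of simplex-embeddability: channels $\iota: K \to S_\Lambda$ and $\kappa^*: S_\Lambda \to S(E)$ satisfying $(\kappa(f))(\iota(\rho)) = f(\rho)$. The key observation is that $\iota$ is itself a measurement on $K$, with effects $h_\lambda = b_\lambda \circ \iota \in E(K)$ summing to $1_K$. For any $f \in E$, expanding $\kappa(f) \in E(S_\Lambda)$ in the dual basis yields nonnegative coefficients $p_\lambda(f) = \kappa(f)(s_\lambda)$, and unit preservation by $\kappa$ gives $p_\lambda(1_K) = 1$ for every $\lambda$. The embedding identity then reads $f = \sum_\lambda p_\lambda(f) h_\lambda$. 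Given two measurements $m_1, m_2 \in \meas(E)$ with effects $f_i$ and $g_j$, I would define the candidate joint effects $h_{ij} = \sum_\lambda p_\lambda(f_i) p_\lambda(g_j) h_\lambda$; these lie in $E(K)$ by convexity, and the marginalization conditions \eqref{eq:incompat-noresSum-fi} and \eqref{eq:incompat-noresSum-gj} follow from $\sum_i p_\lambda(f_i) = \sum_j p_\lambda(g_j) = p_\lambda(1_K) = 1$.

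I do not expect a serious obstacle, since the heavy lifting has already been absorbed into Lemma \ref{lemma:GPTs-channelInRestricted} and Proposition \ref{prop:incompat-Ecompat-SEsimplex}. The subtlest bookkeeping is in (ii): confirming that the coefficients $p_\lambda(f)$ genuinely behave like conditional probabilities, which reduces to $\kappa^*$ being a channel between state spaces and hence $\kappa$ preserving positivity and the unit. A shortcut for (ii) would be to invoke Proposition \ref{prop:incompat-EKcompat-NCprepHVM}: simplex-embeddability trivially yields a preparation noncontextual hidden variable model (take $\iota$ and set $\mu_m = m \circ \kappa^*$), and $E(K)$-compatibility of all measurements then follows immediately. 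I would still prefer the explicit construction above, since the joint measurements themselves reappear as objects of interest in the section on steering.
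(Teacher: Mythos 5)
Your proof is correct and follows essentially the same route as the paper: part (i) is identical (Proposition~\ref{prop:incompat-Ecompat-SEsimplex} plus the choices $S_\Lambda = S(E)$, $\iota = \Phi_{(K,E)}$, $\kappa^* = \id$), and part (ii) differs only in that you inline the explicit product joint observable $h_{ij} = \sum_\lambda p_\lambda(f_i)p_\lambda(g_j)h_\lambda$, whereas the paper views every measurement as a post-processing $m \circ \kappa^*$ of the single measurement $\iota$ and cites \cite[Subsection 5.2]{HeinosaariMiyaderaZiman-compatibility} for exactly this construction. The shortcut via Proposition~\ref{prop:incompat-EKcompat-NCprepHVM} that you mention is also the alternative argument the paper records immediately after its proof.
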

\begin{proof}
If all measurements in $(K,E)$ are $E$-compatible then $S(E)$ is a simplex, see Proposition~\ref{prop:incompat-Ecompat-SEsimplex}. The main idea used to prove \eqref{item:incompat-hierarchy-EtoSE} is to take $S_\Lambda = S(E)$, $\kappa^* = \id$ and $\iota = \Phi_{(K,E)}$, where $\Phi_{(K,E)}$ is the channel given by Lemma~\ref{lemma:GPTs-channelInRestricted}. Let $\rho \in K$ and $f \in E$, we then have $\kappa(f) = f$ and $f( \Phi_{(K,E)}(\rho) ) = f(\rho)$ follows from Lemma~\ref{lemma:GPTs-channelInRestricted}.

Assume that $(K,E)$ is simplex-embeddable and let $m_x$ be a collection of measurements in $(K,E)$. Let $\rho \in K$, we then have
\begin{equation}
m(\rho) = (m \circ \kappa^* \circ \iota)(\rho) = (m \circ \kappa^*)(\iota(\rho))
\end{equation}
where $\iota: K \to S_\Lambda$ can be seen as a measurement with effects from $E(K)$ and $m \circ \kappa^*: S_\Lambda \to S_O$ can be seen as classical post-processing. Thus the measurements $m_x$ are $E(K)$-compatible \cite[Subsection 5.2]{HeinosaariMiyaderaZiman-compatibility}.
\end{proof}
Another way to prove \eqref{item:incompat-hierarchy-SEtoEK} is to observe that if $(K,E)$ is simplex embeddable, then there also exists a preparation noncontextual model for $(K,E)$ and we get the wanted result using Proposition~\ref{prop:incompat-EKcompat-NCprepHVM}.

\begin{figure}
\centering
\includegraphics[width=0.4\textwidth]{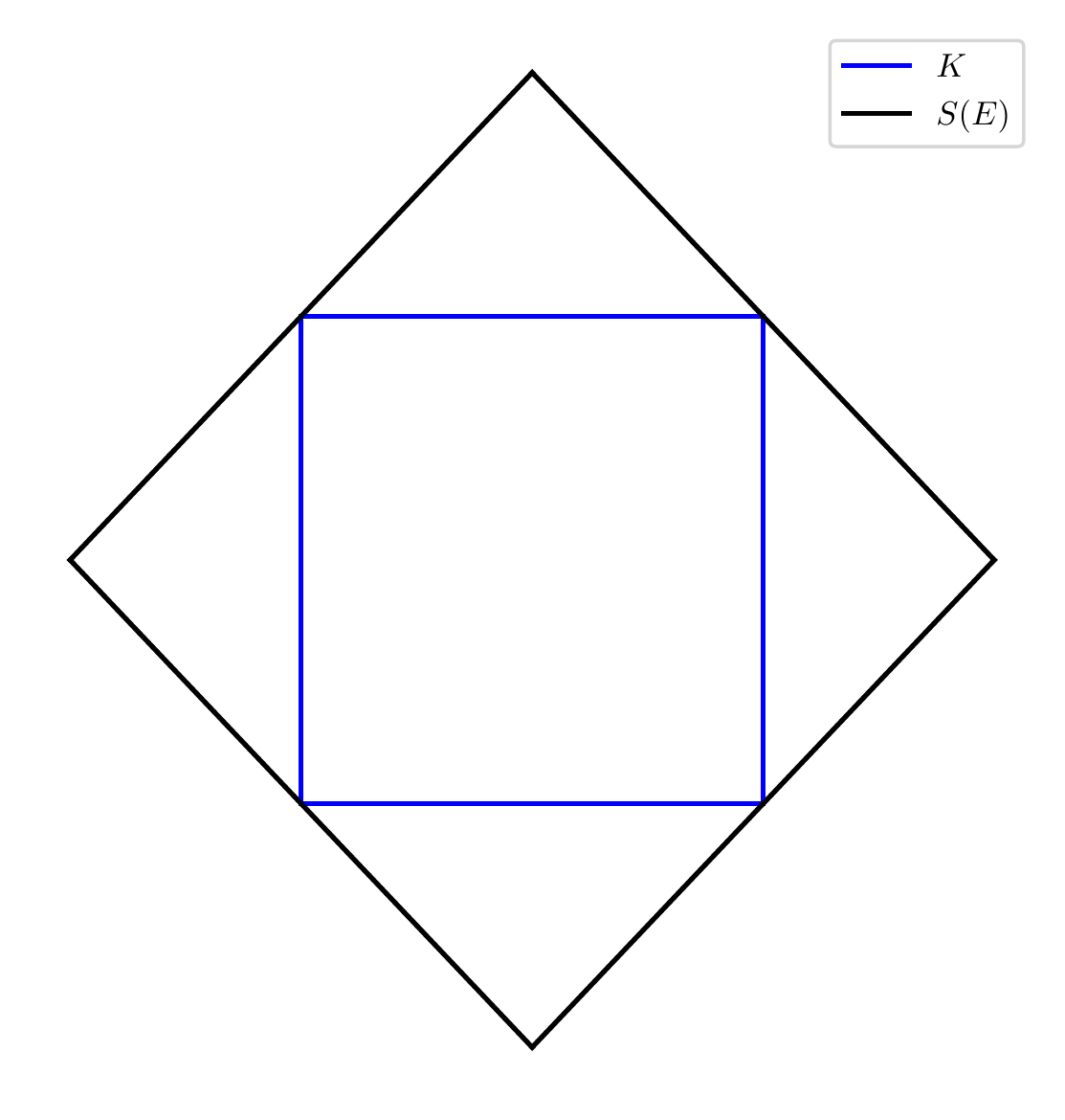}
\caption{Embedding of the square state space $K$ inside the larger square $S(E)$ used in Example~\ref{exm:incompat-squareInSquare}.}
\label{fig:incompat-squareInSquare}
\end{figure}

The following example proves that the inclusion in Theorem~\ref{thm:incompat-hierarchy}~\eqref{item:incompat-hierarchy-EtoSE} is strict.
\begin{example}[Square-in-square] \label{exm:incompat-squareInSquare}
Example of a theory that is simplex-embeddable, but not $E$-compatible is the stabilizer rebit theory theory used in \cite{SchmidSelbyWolfeKunjwalSpekkens-noncontextuality}. Let both $K$ and $S(E)$ be squares, such that $K$ is the convex hull of the midpoints of edges of $S(E)$, see Figure~\ref{fig:incompat-squareInSquare}. It is well-known that there are $E$-incompatible two-outcome measurements \cite{BuschHeinosaariSchultzStevens-compatibility,JencovaPlavala-maxInc}, since $S(E)$ is a square and $E(S(E)) = E$. It was shown in \cite{SchmidSelbyWolfeKunjwalSpekkens-noncontextuality} that $(K,E)$ is simplex-embeddable.
\end{example}

Using the result of Proposition~\ref{prop:incompat-EKcompat-NCprepHVM} and Theorem~\ref{thm:incompat-hierarchy} we get the following corollaries about theories with no restrictions.
\begin{corollary} \label{coro:incompat-noRestrictionSE}
Let $(K, E(K))$ be a theory satisfying the no-restriction hypothesis and assume that $(K, E(K))$ is simplex-embeddable. Then $K$ is a simplex.
\end{corollary}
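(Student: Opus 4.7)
The plan is to chain together the two preceding results, exploiting the crucial simplification that the no-restriction hypothesis collapses the distinction between $E$-compatibility and $E(K)$-compatibility. Concretely, in the no-restriction setting we have $E = E(K)$, so Definition~\ref{def:incompat-Ecompat} and Definition~\ref{def:incompat-EKcompat} coincide, and any joint measurement that exists with $h_{ij} \in E(K)$ automatically has $h_{ij} \in E$.

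The argument I would run is then as follows. First, apply Theorem~\ref{thm:incompat-hierarchy}\,\eqref{item:incompat-hierarchy-SEtoEK} to the simplex-embeddable theory $(K, E(K))$: every collection of measurements in $(K, E(K))$ is $E(K)$-compatible. Second, because $E = E(K)$ in the no-restriction case, being $E(K)$-compatible is the same as being $E$-compatible, so in fact every collection of measurements in $(K, E(K))$ is $E$-compatible. Third, apply the ``only if'' direction of Proposition~\ref{prop:incompat-Ecompat-SEsimplex}: since all measurements are $E$-compatible, $S(E)$ must be a simplex. Finally, recall from Sec.~\ref{sec:GPT} that for a theory with no restrictions one has $S(E(K)) = K$ up to isomorphism, so $K$ itself is a simplex.

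There is essentially no obstacle: all the technical work has already been absorbed into Proposition~\ref{prop:incompat-Ecompat-SEsimplex} and Theorem~\ref{thm:incompat-hierarchy}, and the only thing one has to verify is the identification of $E$-compatibility with $E(K)$-compatibility when $E = E(K)$, which is immediate from the definitions. The one subtlety worth flagging explicitly in the write-up is the identification $S(E(K)) \cong K$, which relies on the tomographic completeness of the no-restriction theory and was noted in the paragraph introducing $S(E)$. Alternatively, one could go directly through Proposition~\ref{prop:incompat-EKcompat-NCprepHVM}: simplex-embeddability trivially implies existence of a preparation noncontextual hidden variable model, hence $E(K)$-compatibility of all measurements, hence (since $E = E(K)$) $E$-compatibility of all measurements, hence $K = S(E(K))$ is a simplex. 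Either route keeps the corollary to a few lines.
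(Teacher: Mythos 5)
Your proof is correct and takes essentially the same route as the paper: both hinge on Theorem~\ref{thm:incompat-hierarchy}\,\eqref{item:incompat-hierarchy-SEtoEK} to get $E(K)$-compatibility of all measurements. The paper then concludes directly that $K$ is a simplex by citing the known result that a no-restriction theory with all measurements compatible has simplicial state space, whereas you reach the same citation indirectly by identifying $E$- with $E(K)$-compatibility and routing through Proposition~\ref{prop:incompat-Ecompat-SEsimplex} and $S(E(K)) \cong K$ --- a cosmetic repackaging, not a different argument.
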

\begin{proof}
If $(K, E(K))$ is simplex-embeddable, then according to Theorem~\ref{thm:incompat-hierarchy} all measurements in $(K,E(K))$ are $E(K)$-compatible and so $K$ must be a simplex \cite{Plavala-simplex, Kuramochi-simplex}.
\end{proof}

\begin{corollary} \label{coro:incompat-noRestrictionNCprepHVM}
Let $(K, E(K))$ be a theory satisfying the no-restriction hypothesis and assume that there is preparation noncontextual hidden variable model for $(K,E(K))$. Then $K$ is a simplex.
\end{corollary}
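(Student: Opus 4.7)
The plan is to combine Proposition~\ref{prop:incompat-EKcompat-NCprepHVM} with the cited characterization of simplices via universal compatibility, essentially mirroring the short proof of Corollary~\ref{coro:incompat-noRestrictionSE} but routed through compatibility rather than through simplex-embeddability.

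First I would invoke Proposition~\ref{prop:incompat-EKcompat-NCprepHVM} with the restricted theory $(K,E)$ chosen to be $(K,E(K))$. The hypothesis gives us a preparation noncontextual hidden variable model for $(K,E(K))$, so the proposition yields that every pair (in fact every collection) of measurements with effects in $E(K)$ is $E(K)$-compatible. Since we are under the no-restriction hypothesis, $E = E(K)$ by definition, and $E(K)$-compatibility coincides with the standard notion of compatibility of measurements on $K$.

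Next I would apply the result of \cite{Plavala-simplex, Kuramochi-simplex}, which states that a state space in which every pair of two-outcome measurements (with effects in $E(K)$) is compatible must be a simplex. Since we have just established that \emph{all} measurements on $K$ are $E(K)$-compatible, this applies a fortiori to every pair of two-outcome measurements, and we conclude that $K$ is a simplex.

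There is no real obstacle here: the corollary is a direct consequence of chaining two already-proved results, and the argument is essentially identical in structure to the proof of Corollary~\ref{coro:incompat-noRestrictionSE}, except that the first step uses Proposition~\ref{prop:incompat-EKcompat-NCprepHVM} instead of the simplex-embeddability-to-$E(K)$-compatibility implication in Theorem~\ref{thm:incompat-hierarchy}. In fact, one could alternatively observe that a preparation noncontextual hidden variable model is a weaker requirement than simplex-embeddability, and so Corollary~\ref{coro:incompat-noRestrictionNCprepHVM} can be viewed as a strengthening of Corollary~\ref{coro:incompat-noRestrictionSE} in the no-restriction setting; the proof via Proposition~\ref{prop:incompat-EKcompat-NCprepHVM} is the most direct route.
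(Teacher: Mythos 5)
Your proposal is correct and follows exactly the paper's own route: apply Proposition~\ref{prop:incompat-EKcompat-NCprepHVM} to $(K,E(K))$ to get that all measurements are $E(K)$-compatible, then invoke the cited characterization of simplices via universal compatibility. No gaps.
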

\begin{proof}
The result immediately follows from Proposition~\ref{prop:incompat-EKcompat-NCprepHVM}: the existence of preparation noncontextual hidden variable model for $(K, E(K))$ implies that all measurements are $E(K)$-compatible, which implies that $K$ is a simplex \cite{Plavala-simplex, Kuramochi-simplex}.
\end{proof}

\section{Connecting steering and noncontextuality} \label{sec:steering}
In this section we will apply the concepts of $E(K)$-compatibility to the problem of a steerability of a bipartite state $\rho_{AB} \in K_{AB}$. In order to do so, we first provide a short review on steering in GPTs and then we show how to construct the restricted theory $(K,E)$ corresponding to a given steering scenario. For an in-depth review on steering in quantum theory see \cite{UolaCostaNguyenGuhne-steering}.

Let $\rho_{AB} \in K_{AB}$ be a bipartite state shared between Alice and Bob and let $m_x$ be a set of measurements that Alice can choose from, each given by a set of effects $f_{a|x} \in E(K_A)$, such that $\sum_a f_{a|x} = 1_{K_A}$. After measuring $m_x$ and obtaining the outcome $a$, Bob is left with the post-measurement state $\sigma_{a|x}$ given as $\sigma_{a|x} = (f_{a|x} \otimes \id_B)(\rho_{AB})$. Note that if $K_A = \dens(\Ha_A)$, $K_B = \dens(\Ha_B)$ and $K_{AB} = \dens(\Ha_A \otimes \Ha_B)$, then the measurements $m_x$ are given by quantum effects $M_{a|x} \in \effect(\Ha_A)$ and $\sigma_{a|x} = \Tr_A ((M_{a|x} \otimes \I) \rho_{AB})$. In general $\sigma_{a|x} \notin K_B$ since they are subnormalized as $1_K(\sigma_{a|x})$ is the probability of Alice obtaining the outcome $a$ when measuring $x$. The set of states $\{\sigma_{a|x}\}$ is called an assemblage and we say that the assemblage $\{\sigma_{a|x}\}$ is not steerable if there exists a local hidden state model \cite{WisemanDohertyJones-nonlocal}, that is if there is set of states $\{\sigma_\lambda\} \subset K_B$, $\lambda \in \Lambda$, and probability distributions $p(\lambda)$ and $p(a|x, \lambda)$ such that $\sigma_{a|x} = \sum_{\lambda \in \Lambda} p(\lambda) p(a|x, \lambda) \sigma_\lambda$. It is known that a necessary conditions for steering are that the state $\rho_{AB}$ is not separable and that the measurements $m_x$ are incompatible. It is of special interest do decide whether a given state $\rho_{AB} \in K_{AB}$ is steerable by some set of measurements $m_x$, or whether $\rho_{AB}$ is steerable by all possible measurements. We will not restrict the cardinality of the set $m_x$, hence one can take $m_x$ to include all possible measurements.

For the purposes of the following proof we will introduce the notion of full-dimensionality of $\rho_{AB}$.
\begin{definition}
Let $\rho_{AB} \in K_{AB}$. We say that $\rho_{AB}$ is $K_B$-full-dimensional if
\begin{equation}
\linspan(K_B) = \linspan(\{ (f \otimes \id_B)(\rho_{AB}) : f \in E(K_\rho) \})
\end{equation}
where $K_\rho$ is given as in \eqref{eq:steering-Krho}.
\end{definition}
The following is an immediate result.
\begin{lemma} \label{lemma:steering-fullDimSufficient}
Let $\rho_{AB} \in K_{AB}$. $\rho_{AB}$ is $K_B$-full-dimensional if
\begin{equation}
\linspan(K_B) = \linspan( \{ (f \otimes \id_B)(\rho_{AB}) : f \in E(K_A) \} )
\end{equation}
\end{lemma}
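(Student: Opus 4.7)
The plan is to prove the lemma by establishing both inclusions between $\linspan(K_B)$ and $\linspan(\{ (f \otimes \id_B)(\rho_{AB}) : f \in E(K_\rho) \})$. The reverse inclusion $\linspan(\{(f \otimes \id_B)(\rho_{AB}) : f \in E(K_\rho)\}) \subseteq \linspan(K_B)$ is automatic: by \eqref{eq:GPT-tensorSpan}, writing $\rho_{AB} = \sum_i \alpha_i \sigma_i^A \otimes \sigma_i^B$, one has $(f \otimes \id_B)(\rho_{AB}) = \sum_i \alpha_i f(\sigma_i^A) \sigma_i^B \in \linspan(K_B)$ for any affine functional $f$. Hence the entire content of the lemma is showing the forward inclusion from the hypothesis.

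For the forward inclusion, my plan is to verify that
\begin{equation*}
\{(f \otimes \id_B)(\rho_{AB}) : f \in E(K_A)\} \subseteq \linspan(\{(g \otimes \id_B)(\rho_{AB}) : g \in E(K_\rho)\}).
\end{equation*}
Given the standard way a state space $K_\rho$ is built from a bipartite state via \eqref{eq:steering-Krho}, I expect $K_\rho$ to be constructed precisely so that effects in $E(K_\rho)$ are in correspondence with Alice's effects modulo the equivalence $f \sim g \iff (f \otimes \id_B)(\rho_{AB}) = (g \otimes \id_B)(\rho_{AB})$. Equivalently, via a channel in the spirit of Lemma~\ref{lemma:GPTs-channelInRestricted}, every $f \in E(K_A)$ should factor as the pullback of some $\tilde f \in E(K_\rho)$ that produces the same conditional state on Bob's side.

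Once this identification is in place, the forward inclusion is immediate: each $(f \otimes \id_B)(\rho_{AB})$ with $f \in E(K_A)$ equals $(\tilde f \otimes \id_B)(\rho_{AB})$ for the associated $\tilde f \in E(K_\rho)$, hence lies in the right-hand set. Taking spans and invoking the hypothesis yields $\linspan(K_B) \subseteq \linspan(\{(g \otimes \id_B)(\rho_{AB}) : g \in E(K_\rho)\})$, which combined with the reverse inclusion above gives the full-dimensionality condition for $\rho_{AB}$.

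The main (and only) obstacle is the effect-lift step from $E(K_A)$ to $E(K_\rho)$. This depends entirely on the construction of $K_\rho$ in \eqref{eq:steering-Krho}; however, since the result is flagged as ``immediate'', this identification must be essentially built into that definition, so the argument should reduce to unfolding the construction of $K_\rho$ and applying linearity, without any additional estimates.
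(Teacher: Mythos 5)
Your overall structure (two inclusions, with all the content in the forward one) matches the paper, and your reverse inclusion is fine. But the one step you flag as the ``main (and only) obstacle'' works for a simpler and slightly different reason than the mechanism you guess at. You picture $E(K_\rho)$ as Alice's effects modulo the equivalence $f \sim g \iff (f\otimes\id_B)(\rho_{AB}) = (g\otimes\id_B)(\rho_{AB})$, i.e.\ as a quotient reached by a channel factorization in the spirit of Lemma~\ref{lemma:GPTs-channelInRestricted}. That is not what the definition \eqref{eq:steering-Krho} gives you: each element $\frac{(\id_A\otimes g)(\rho_{AB})}{(1_{K_A}\otimes g)(\rho_{AB})}$ is by construction a normalized state of Alice's system, so $K_\rho$ is simply a \emph{subset} of $K_A$. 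Consequently every $f \in E(K_A)$, being an affine $[0,1]$-valued function on $K_A$, restricts to an affine $[0,1]$-valued function on $K_\rho$, i.e.\ $E(K_A) \subset E(K_\rho)$ (a smaller state space has \emph{more} effects, not a quotient of them --- $E(K_\rho)$ generally contains effects that do not extend to elements of $E(K_A)$ at all). This inclusion is the entire proof in the paper: the hypothesis set $\{(f\otimes\id_B)(\rho_{AB}) : f \in E(K_A)\}$ is contained in $\{(f\otimes\id_B)(\rho_{AB}) : f \in E(K_\rho)\}$, so if the former already spans $\linspan(K_B)$, so does the latter, and combined with your reverse inclusion this is the definition of $K_B$-full-dimensionality.

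So there is no step that would \emph{fail}, but your proposal does leave the pivotal identification unproven and mispredicts its form: no lift, pullback, or equivalence-class argument is needed, only the observation $K_\rho \subset K_A$ and restriction of effects. Had you unfolded \eqref{eq:steering-Krho} as you said you would, you would have landed on exactly the paper's one-line argument; as written, the proposal stops just short of it.
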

\begin{proof}
The result follows from $E(K_A) \subset E(K_\rho)$, which follows from $K_\rho \subset K_A$.
\end{proof}
It is easy to see that in quantum theory if the dimensions of the local Hilbert spaces are the same, then every state with full Schmidt rank is both $K_A$-full-dimensional and $K_B$-full-dimensional.

Let us now build the restricted theory $(K,E)$ corresponding to the steering scenario. Let $\rho_{AB } \in K_{AB}$ and let $g \in E(K_B)$, then $\frac{(\id_A \otimes g)(\rho_{AB})}{(1_{K_A} \otimes g)(\rho_{AB})} \in K_A$ and so we define
\begin{equation} \label{eq:steering-Krho}
K_\rho = \left\lbrace \frac{(\id_A \otimes g)(\rho_{AB})}{(1_{K_A} \otimes g)(\rho_{AB})} : g \in E(K_B) \right\rbrace.
\end{equation}
Let $\{ m_x \}$ be the set of measurements that we want to use for steering. Note that we can without loss of generality assume that $\{ m_x \}$ is closed with respect to post-processing of measurements, as if the set $\{ m_x \}$ does not steer a state $\rho_{AB}$, then we can also construct the local hidden state model for every measurement that is a post-processing of $m_x$. We will define $E_m$ to be the smallest effect algebra containing all of the effects corresponding to $m_x$, i.e.,
\begin{equation} \label{eq:steering-Em}
E_m = \mathrm{effect}( m_x ).
\end{equation}
It follows that if $m_x$ includes all possible measurements, then $E_m = E(K_A)$. The theory $(K_\rho, E_m)$ does not have to be tomographically complete, take for example $\rho_{AB} = \rho_A \otimes \rho_B$, then $K_\rho$ contains only a single state. In some cases one can get rid of the tomographical incompleteness but there are also pathological cases, both of these cases will be addressed later on.

We can now introduce the result that connects noncontextuality and steering.
\begin{theorem} \label{thm:steering-mainResult}
Let $\rho_{AB} \in K_{AB}$ be $K_B$-full-dimensional state and let $(K_\rho, E_m)$ be the restricted theory as given by \eqref{eq:steering-Krho} and \eqref{eq:steering-Em}. Then $\rho_{AB}$ is not steerable by $m_x$ if and only if there exists a preparation noncontextual hidden variable model for $(K_\rho, E_m)$.
\end{theorem}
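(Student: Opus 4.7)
The plan is to combine Proposition~\ref{prop:incompat-EKcompat-NCprepHVM} with a direct duality between effects in $E(K_\rho)$ and subnormalized states of $K_B$ induced by $\rho_{AB}$. Since $\meas(E_m) = \{m_x\}$ by the standing assumption that $\{m_x\}$ is closed under post-processing, Proposition~\ref{prop:incompat-EKcompat-NCprepHVM} reduces the theorem to showing that $\{m_x\}$ is $E(K_\rho)$-compatible if and only if $\rho_{AB}$ admits a local hidden state model for $\{m_x\}$.

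For the direction from $E(K_\rho)$-compatibility to non-steerability, I would take an $E(K_\rho)$-joint measurement with effects $h_\lambda$ and post-processings $p(a|x,\lambda)$ satisfying $f_{a|x}|_{K_\rho} = \sum_\lambda p(a|x,\lambda)\, h_\lambda$, and set $\sigma_\lambda := (h_\lambda \otimes \id_B)(\rho_{AB})$. The key remark is that for any $g \in E(K_B)$ the vector $(\id_A \otimes g)(\rho_{AB})$ is positive on $E(K_A)$ with $1_{K_A}$-normalization $(1_{K_A}\otimes g)(\rho_{AB})$, hence equals $(1_{K_A}\otimes g)(\rho_{AB})\cdot \tau_g \in \RR_{\geq 0}\cdot K_\rho$, so $h_\lambda$ evaluates on it unambiguously and $\sigma_\lambda$ is a well-defined element of $\linspan(K_B)$. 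Positivity of $h_\lambda$ on $K_\rho$ forces $g(\sigma_\lambda) \geq 0$ for every $g \in E(K_B)$, placing $\sigma_\lambda$ in the conic hull of $K_B$; summing over $\lambda$ yields $\sum_\lambda \sigma_\lambda = (1_{K_A}\otimes \id_B)(\rho_{AB}) = \rho_B$, and $\sigma_{a|x} = \sum_\lambda p(a|x,\lambda)\sigma_\lambda$ follows by applying $\sum_\lambda p(a|x,\lambda)(h_\lambda\otimes\id_B)$ to $\rho_{AB}$. Writing $\sigma_\lambda = p(\lambda)s_\lambda$ with $p(\lambda) := 1_{K_B}(\sigma_\lambda)$ and $s_\lambda \in K_B$ gives the desired local hidden state model.

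For the converse, from a local hidden state model $\sigma_{a|x} = \sum_\lambda p(\lambda)\, p(a|x,\lambda)\, s_\lambda$ with $s_\lambda \in K_B$ I would attempt to define
\begin{equation}
h_\lambda\!\left( \frac{(\id_A \otimes g)(\rho_{AB})}{(1_{K_A} \otimes g)(\rho_{AB})} \right) := \frac{p(\lambda)\, g(s_\lambda)}{(1_{K_A} \otimes g)(\rho_{AB})}.
\end{equation}
The only non-trivial step is well-definedness: if $g_1, g_2 \in E(K_B)$ with normalizations $c_i := (1_{K_A}\otimes g_i)(\rho_{AB})$ yield $\tau_{g_1} = \tau_{g_2}$, then $c_2\,(\id_A \otimes g_1)(\rho_{AB}) = c_1\,(\id_A \otimes g_2)(\rho_{AB})$, and applying an arbitrary $f \in E(K_\rho)$ (legitimately, since both sides lie in $\linspan(K_\rho)$) together with $(f\otimes g)(\rho_{AB}) = g((f\otimes\id_B)(\rho_{AB}))$ shows that the functional $c_2 g_1 - c_1 g_2$ annihilates every vector of the form $(f\otimes\id_B)(\rho_{AB})$ with $f \in E(K_\rho)$. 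By the $K_B$-full-dimensionality hypothesis these vectors span $\linspan(K_B)$, so $c_2 g_1 - c_1 g_2$ vanishes on $s_\lambda$ as well, which is precisely the required consistency. Once $h_\lambda$ is well-defined, affinity and non-negativity are immediate, the identity $\sum_\lambda h_\lambda = 1_{K_\rho}$ reduces to $\sum_\lambda p(\lambda) s_\lambda = \rho_B$, and $f_{a|x}|_{K_\rho} = \sum_\lambda p(a|x,\lambda)\, h_\lambda$ follows by evaluating both sides on an arbitrary $\tau_g$.

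The principal obstacle is exactly the well-definedness argument in the converse: $K_B$-full-dimensionality enters the proof only at this point, and its role is precisely to ensure that the hidden states $s_\lambda$ of the steering model are fully detected by effects on $K_\rho$, so that the steering decomposition can be transferred into a compatibility decomposition inside the restricted theory $(K_\rho, E_m)$.
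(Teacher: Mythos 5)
Your proposal is correct and follows essentially the same route as the paper: reduce to $E(K_\rho)$-compatibility via Proposition~\ref{prop:incompat-EKcompat-NCprepHVM}, pass between joint-measurement effects $h_\lambda$ and local hidden states via the pairing $(h_\lambda\otimes g)(\rho_{AB})$, and invoke $K_B$-full-dimensionality in the converse direction. The only (cosmetic) difference is that the paper obtains $h_\lambda$ by surjectivity of $f\mapsto(f\otimes\id_B)(\rho_{AB})$ onto $\linspan(K_B)$ and then checks positivity and normalization, whereas you define $h_\lambda$ pointwise on $K_\rho$ and use full-dimensionality to verify well-definedness --- two equivalent formulations of the same step.
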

\begin{proof}
Assume that a preparation noncontextual hidden variable model for $(K_\rho, E_m)$ exists. According to Theorem~\ref{thm:incompat-hierarchy} the measurements given by $f_{a|x}$ are $E(K_\rho)$-compatible, and so there is a measurement given by effects $h_\lambda \in E(K_\rho)$, $\lambda \in \Lambda$, such that $f_{a|x} = \sum_{\lambda \in \Lambda} p(a| x, \lambda)  h_\lambda$. Let $\omega \in K_\rho$, then there is $g \in E(K_B)$ such that $\omega = \frac{(\id_A \otimes g)(\rho_{AB})}{(1_{K_A} \otimes g)(\rho_{AB})}$
and we get
\begin{eqnarray}
f_{a|x}(\omega) = \sum_{\lambda \in \Lambda} p(a| x, \lambda)  h_\lambda(\omega), \\
(f_{a|x} \otimes g)(\rho_{AB}) = \sum_{\lambda \in \Lambda} p(a| x, \lambda)  (h_\lambda \otimes g)(\rho_{AB}),
\end{eqnarray}
for all $g \in E(K_B)$, which implies that
\begin{equation}
(f_{a|x} \otimes \id_B)(\rho_{AB}) = \sum_{\lambda \in \Lambda} p(a| x, \lambda)  (h_\lambda \otimes \id_B)(\rho_{AB}).
\end{equation}
Denoting $(h_\lambda \otimes \id_B)(\rho_{AB}) = p(\lambda) \sigma_\lambda$ we get
\begin{equation}
(f_{a|x} \otimes \id_B)(\rho_{AB}) = \sum_{\lambda \in \Lambda} p(\lambda) p(a| x, \lambda)  \sigma_\lambda
\end{equation}
which is the local hidden state model for $\rho_{AB}$ and $f_{a|x}$. Hence $\rho_{AB}$ is not steerable by $f_{a|x}$.

Now assume that $\rho_{AB}$ is not steerable by $f_{a|x}$ and so there always is a local hidden state model given as
\begin{equation}
(f_{a|x} \otimes \id_B)(\rho_{AB}) = \sum_{\lambda \in \Lambda} p(\lambda) p(a| x, \lambda) \sigma_\lambda,
\end{equation}
where $\sigma_\lambda \in K_B$ for all $\lambda \in \Lambda$. Since $\rho_{AB}$ is $K_B$-full-dimensional it follows that for every $\sigma_\lambda \in K_B$ there is some $h_\lambda \in \linspan(E(K_\rho))$ such that
\begin{equation} \label{eq:steering-mainResult-sigmaLambda}
p(\lambda) \sigma_\lambda = (h_\lambda \otimes \id_B)(\rho_{AB})
\end{equation}
and so we have
\begin{equation}
(f_{a|x} \otimes \id_B)(\rho_{AB}) = \sum_{\lambda \in \Lambda} p(a| x, \lambda)  (h_\lambda \otimes \id_B)(\rho_{AB}).
\end{equation}
Let $g \in E(K_B)$, then we have
\begin{equation}
(f_{a|x} \otimes g)(\rho_{AB}) = \sum_{\lambda \in \Lambda} p(a| x, \lambda)  (h_\lambda \otimes g)(\rho_{AB})
\end{equation}
which implies
\begin{equation}
f_{a|x}(\omega) = \sum_{\lambda \in \Lambda} p(a| x, \lambda)  h_\lambda(\omega)
\end{equation}
for all $\omega \in K_\rho$. Thus we have
\begin{equation}
f_{a|x} = \sum_{\lambda \in \Lambda} p(a| x, \lambda)  h_\lambda.
\end{equation}
This already is close to the wanted result, we only need to prove that $h_\lambda$ are positive functions and that $\sum_{\lambda \in \Lambda} h_\lambda = 1_{K_\rho}$.

So assume that there is some $\omega \in K_\rho$ and $\lambda \in \Lambda$ such that $h_\lambda(\omega) < 0$. Then there is some $g \in E(K_B)$ such that $\omega = \frac{(\id_A \otimes g)(\rho_{AB})}{(1_{K_A} \otimes g)(\rho_{AB})}$ and we get
\begin{equation}
0 > h_\lambda(\omega) = \frac{(h_\lambda \otimes g)(\rho_{AB})}{(1_{K_A} \otimes g)(\rho_{AB})} = \frac{p(\lambda)}{(1_{K_A} \otimes g)(\rho_{AB})} g(\sigma_\lambda).
\end{equation}
Since $\frac{p(\lambda)}{(1_{K_A} \otimes g)(\rho_{AB})} \geq 0$ it follows that $g(\sigma_\lambda) < 0$, but this is a contradiction because $g \in E(K_B)$ and $\sigma_\lambda \in K_B$. To show that $\sum_{\lambda \in \Lambda} h_\lambda = 1_{K_\rho}$ observe that
\begin{eqnarray}
(1_{K_A} \otimes \id_B)(\rho_{AB}) &= \sum_{a} (f_{a|x} \otimes \id_B)(\rho_{AB}) = \sum_{\lambda \in \Lambda} \sum_a p(\lambda) p(a| x, \lambda) \sigma_\lambda \\
&= \sum_{\lambda \in \Lambda} p(\lambda) \sigma_\lambda = \sum_{\lambda \in \Lambda} (h_\lambda \otimes \id_B)(\rho_{AB})
\end{eqnarray}
Let $g \in E(K_B)$, then we have $(1_{K_A} \otimes g)(\rho_{AB}) = \sum_{\lambda \in \Lambda} (h_\lambda \otimes g)(\rho_{AB})$ which implies that for all $\omega \in K_\rho$ we get $1_{K_A}(\omega) = \sum_{\lambda \in \Lambda} h_\lambda (\omega)$ and the result follows.
\end{proof}

Let us first discuss the requirement that $\rho_{AB}$ is $K_B$-full-dimensional. If $\rho_{AB}$ is not $K_B$-full-dimensional, then not all is lost, as in some cases one can restrict $K_B$ to $K_B \cap \linspan(\{ (f \otimes \id_B)(\rho_{AB}) : f \in E(K_\rho) \})$. Denote $J = \linspan(\{ (f \otimes \id_B)(\rho_{AB}) : f \in E(K_\rho) \})$ and assume that there is a channel $\Phi: K_B \to K_B$ such that for all $x \in K_B$ we have $\Phi(x) \in J$ and $\Phi(v) = v$ for all $v \in J$. Assume there is a local hidden state model $(f_{a|x} \otimes \id_B)(\rho_{AB}) = \sum_{\lambda \in \Lambda} p(\lambda) p(a|x, \lambda) \sigma_\lambda$. Since $(f_{a|x} \otimes \id_B)(\rho_{AB}) \in J$ by definition, by applying $\Phi$ to both sides we get $(f_{a|x} \otimes \id_B)(\rho_{AB}) = \sum_{\lambda \in \Lambda} p(\lambda) p(a|x, \lambda) \Phi(\sigma_\lambda)$. Now since $\Phi(\sigma_\lambda) \in K_B \cap J$ we can without loss of generality replace $K_B$ with $K_B \cap J$, simply because from any local hidden state model given by $\sigma_\lambda \in K_B$ we can construct a new local hidden state model given by $\Phi(\sigma_\lambda) \in K_B \cap J$. Note that it is sufficient that there is a channel $\Phi': K_B \to K_B$ such that $\Phi'(v) = v$ for all $v \in J$, then one can use \cite[Lemma 1]{BarnumBarrettLeiferWilce-noBroadcastingPRL} to construct $\Phi: K_B \to K_B$ with the desired properties.

One may now hope to use similar ideas to extend Theorem~\ref{thm:steering-mainResult} to all states $\rho_{AB} \in K_{AB}$, but the following example shows that this is not possible.

\begin{figure}
\centering
\includegraphics[width=0.6\textwidth]{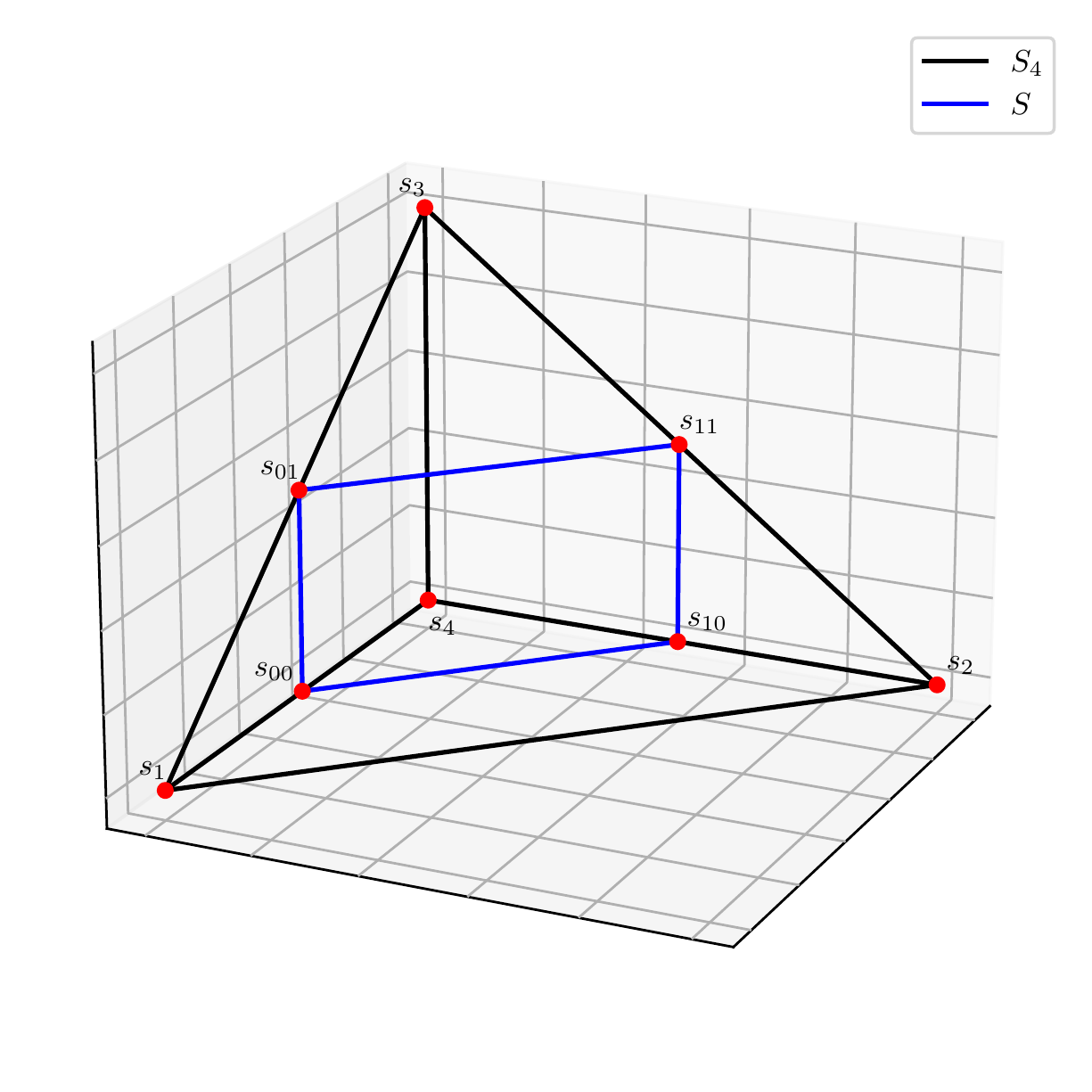}
\caption{Embedding of the square state space $S$ inside the simplex $S_4$ used in Example~\ref{exm:steering-SinS4}.}
\label{fig:steering-SinS4}
\end{figure}

\begin{example}[Sqaure-in-tetrahedron] \label{exm:steering-SinS4}
Let $s_{00}, s_{10}, s_{01}, s_{11}$ be four points such that $\frac{1}{2} (s_{00} + s_{11}) = \frac{1}{2} (s_{10} + s_{01})$, then $S = \conv(\{s_{00}, s_{10}, s_{01}, s_{11}\})$ is, up to an isomorphism, a square. Moreover let $S_4$ be the tetrahedron, i.e., a simplex with four vertices, then there is a subspace $J$ such that $S = S_4 \cap J$, see Figure~\ref{fig:steering-SinS4}. Note that $J \subset \linspan(S_4)$, where $\dim(\linspan(S_4)) = 4$, while Figure~\ref{fig:steering-SinS4} represents only $\aff(S_4)$ as $\dim(\aff(S_4)) = 3$. Let $K_A = S$ and $K_B = S_4$ then we have $S \tmin S_4 = S \tmax S_4$ \cite{AubrunLamiPalazuelosPlavala-cones} and so all bipartite states are separable. This trivially implies that for any measurements $m_x \in \meas(E(K_A)) = \meas(E(S))$ there is a local hidden state model and so no steering occurs. Let $V = \linspan(S)$, then it follows from \cite[Proposition 4.5]{BluhmJencovaNechita-spectrahedra} that $(S \tmin S_4) \cap (V \otimes J) = S \tmax S$. It is known that there are bipartite states $\rho_{AB} \in S \tmax S$ that maximally violate CHSH inequality \cite{PopescuRohrlich-PRbox, Barrett-GPTinformation, BeckmanGottesmanNielsenPreskill-channels, HobanSainz-channels, Crepeau-CHSH, PlavalaZiman-PRbox, JencovaPlavala-PRbox} which necessary implies that there are measurements given by $m_x \in \meas(E(K_A)) = \meas(E(S))$ that steer $\rho_{AB} \in S \tmax S$, because steering is necessary for violations of Bell inequalities \cite{Plavala-channels}. Since $\rho_{AB} \in S \tmax S$ and $S \subset S_4$, we also have $\rho_{AB} \in S \tmin S_4$. It follows that in this case we cannot restrict $S_4$ to $S_4 \cap J$, because for $\rho_{AB} \in S \tmin S_4$ a local hidden variable model always exists, while for the same $\rho_{AB} \in S \tmax S$ it does not.
\end{example}
Note that in the last example we said that $\rho_{AB} \in S \tmin S_4$ is separable while $\rho_{AB} \in S \tmax S$ violates Bell inequality. This is because the measurements that Bob has to measure on $S$ in order to violate the CHSH inequality are not valid measurements on $S_4$. This is simply because $S \subset S_4$ implies $E(S_4) \subset E(S)$ where the inclusion is strict and the missing effects correspond to those measurements that are needed in order to violate the CHSH inequality.

Finally we present an example of steering of the isotropic state. Similar result was also obtained in \cite{HeinosaariKiukasReitznerSchultz-incompatibilityBreakingChannels} using incompatibility breaking channels.
\begin{example}[Steering of the isotropic state] \label{exm:steering-isotropicState}
Let $\Ha$ be a complex Hilbert space, $\dim(\Ha) = 2$ and let $K_A = K_B = \dens(\Ha)$. Let $\rho_{AB}^\gamma = \gamma \ketbra{\phi^+}_{AB} + \frac{1-\gamma}{4} \I_A \otimes \I_B$ be the isotropic state, where $\ket{\phi^+} = \frac{1}{\sqrt{2}}(\ket{00} + \ket{11})$ is the maximally entangled state and $\ket{0}, \ket{1}$ is the computational basis. We will assume that Alice and Bob share the state $\rho_{AB}$ and that Alice has access to all possible measurements. It is known that the isotropic state is not steerable for $\gamma \leq \frac{5}{12}$ \cite{UolaCostaNguyenGuhne-steering}. In order to apply the result of Theorem~\ref{thm:steering-mainResult} we must show that $\rho_{AB}^\gamma$ is $K_B$-full-dimensional. Let $M \in \effect(\Ha)$, which is equivalent to $0 \leq M \leq \I$, then we have
\begin{equation}
\Tr_A ((M \otimes \I) \rho_{AB}^\gamma) = \gamma M^\intercal + (1-\gamma) \Tr(M) \I.
\end{equation}
It is now straightforward to check that for $\gamma > 0$ we have
\begin{equation}
\linspan( \gamma M^\intercal + (1-\gamma) \Tr(M) \I: 0 \leq M \leq I) = \bound(\Ha),
\end{equation}
thus according to Lemma~\ref{lemma:steering-fullDimSufficient} $\rho_{AB}^\gamma$ is $K_B$-full-dimensional for $\gamma > 0$.

Let $0 < \gamma \leq \frac{5}{12}$ then according to Theorem~\ref{thm:steering-mainResult} $(K_{\rho_{AB}^\gamma}, \effect(\Ha))$ is a restricted theory with preparation noncontextual hidden variable model. According to Proposition~\ref{prop:incompat-EKcompat-NCprepHVM} this is equivalent to all measurements in $(K_{\rho_{AB}^\gamma}, \effect(\Ha))$ being $E(K_{\rho_{AB}^\gamma})$-compatible. This is in fact a dual form of the known result that for $0 \leq \gamma \leq \frac{5}{12}$ the depolarizing channel is incompatibility-breaking \cite{HeinosaariKiukasReitznerSchultz-incompatibilityBreakingChannels}, the connection here is that the state $\rho_{AB}^\gamma$ is the Choi matrix of the corresponding depolarizing channel.
\end{example}

\section{Conclusions} \label{sec:conclusions}
We investigated the connection between existence of noncontextual hidden variable models, incompatibility in restricted general probabilistic theories (GPTs) and steering. Our main results are Theorem~\ref{thm:incompat-hierarchy} which connects simplex-embeddability and incompatibility in restricted GPTs and Theorem~\ref{thm:steering-mainResult} which connects steerability of a state with existence of preparation noncontextual hidden variable model for the induced restricted theory. The proof of Theorem~\ref{thm:steering-mainResult} is build on Proposition~\ref{prop:incompat-EKcompat-NCprepHVM} which connects compatibility and existence of preparation noncontextual hidden variable model.

There are several open questions to be addressed in the future, the main is to search for operational equivalents of measurement noncontextual hidden variable models similar to Proposition~\ref{prop:incompat-EKcompat-NCprepHVM} and Theorem~\ref{thm:steering-mainResult}. One can also ask for which GPTs satisfying no-restriction hypothesis measurement noncontextual hidden variable model exists; it is an open question whether this is the case only if the state space of the GPT in question is a simplex, i.e., it is an open question whether equivalent of Corollary~\ref{coro:incompat-noRestrictionSE} and Corollary~\ref{coro:incompat-noRestrictionNCprepHVM} also holds for measurement noncontextual hidden variable models.

\ack
The author is thankful to Chau Nguyen for discussions. This research was supported by the Deutsche Forschungsgemeinschaft (DFG, German Research Foundation, project numbers 447948357 and 440958198), by the Sino-German Center for Research Promotion (Project M-0294), by the ERC (Consolidator Grant 683107/TempoQ), and by the Alexander von Humboldt Foundation.

\bibliographystyle{iopart-num}
\bibliography{citations}

\providecommand{\newblock}{}
\begin{thebibliography}{10}
\expandafter\ifx\csname url\endcsname\relax
  \def\url#1{{\tt #1}}\fi
\expandafter\ifx\csname urlprefix\endcsname\relax\def\urlprefix{URL }\fi
\providecommand{\eprint}[2][]{\url{#2}}

\bibitem{Bell-ineq}
Bell J~S 1964 {\em Physics (College. Park. Md).\/} {\bf 1} 195 -- 200

\bibitem{KochenSpecker-hiddenVariables}
Kochen S and Specker E 1967 {\em Journal of Mathematics and Mechanis\/} {\bf
  17} 59--87 ISSN 0022-2518

\bibitem{BudroniCabelloGuhneKleinmann-contextuality}
Budroni C, Cabello A, G{\"{u}}hne O, Kleinmann M and Larsson J~{\AA} 2021
  (\textit{Preprint} \eprint{2102.13036})

\bibitem{SchmidSelbyWolfeKunjwalSpekkens-noncontextuality}
Schmid D, Selby J~H, Wolfe E, Kunjwal R and Spekkens R~W 2021 {\em PRX
  Quantum\/} {\bf 2} 010331 (\textit{Preprint} \eprint{1911.10386})

\bibitem{Muller-review}
M{\"{u}}ller M 2021 {\em SciPost Physics Lecture Notes\/}  28 ISSN 2590-1990
  (\textit{Preprint} \eprint{2011.01286})

\bibitem{Plavala-review}
Plávala M 2021 General probabilistic theories: An introduction
  (\textit{Preprint} \eprint{2103.07469})

\bibitem{PopescuRohrlich-PRbox}
Popescu S and Rohrlich D 1994 {\em Foundations of Physics\/} {\bf 24} 379--385

\bibitem{Barrett-GPTinformation}
Barrett J 2007 {\em Physical Review A\/} {\bf 75} 032304 (\textit{Preprint}
  \eprint{0508211})

\bibitem{CavalcantiSelbySikoraGalleySainz-Witworld}
Cavalcanti P~J, Selby J~H, Sikora J, Galley T~D and Sainz A~B 2021 {Witworld: A
  generalised probabilistic theory featuring post-quantum steering}
  (\textit{Preprint} \eprint{2102.06581})

\bibitem{Spekkens-contextuality}
Spekkens R~W 2005 {\em Physical Review A\/} {\bf 71} 052108 (\textit{Preprint}
  \eprint{0406166})

\bibitem{Spekkens-toyTheory}
Spekkens R~W 2007 {\em Physical Review A\/} {\bf 75} 032110 (\textit{Preprint}
  \eprint{0401052})

\bibitem{ChiribellaYuan-contextuality}
Chiribella G and Yuan X 2014  (\textit{Preprint} \eprint{1404.3348})

\bibitem{SchmidSpekkensWolfe-nonconIneq}
Schmid D, Spekkens R~W and Wolfe E 2018 {\em Physical Review A\/} {\bf 97}
  062103 ISSN 2469-9926 (\textit{Preprint} \eprint{1710.08434})

\bibitem{SchmidSelbyPuseySpekkens-nonconModels}
Schmid D, Selby J~H, Pusey M~F and Spekkens R~W 2020 {A structure theorem for
  generalized-noncontextual ontological models} (\textit{Preprint}
  \eprint{2005.07161})

\bibitem{SchmidSelbySpekkens-causalInferentialTheories}
Schmid D, Selby J~H and Spekkens R~W 2020 {Unscrambling the omelette of
  causation and inference: The framework of causal-inferential theories}
  (\textit{Preprint} \eprint{2009.03297})

\bibitem{GittonWoods-unitSeparability}
Gitton V and Woods M~P 2020 {Solvable Criterion for the Contextuality of any
  Prepare-and-Measure Scenario} (\textit{Preprint} \eprint{2003.06426})

\bibitem{HeinosaariMiyaderaZiman-compatibility}
Heinosaari T, Miyadera T and Ziman M 2015 {\em Journal of Physics A:
  Mathematical and Theoretical\/} {\bf 49} 123001 (\textit{Preprint}
  \eprint{1511.07548})

\bibitem{GuhneHaapasaloKraftPellonpaaUola-incompatibility}
G{\"{u}}hne O, Haapasalo E, Kraft T, Pellonp{\"{a}}{\"{a}} J~P and Uola R 2021
  {Incompatible measurements in quantum information science} (\textit{Preprint}
  \eprint{2112.06784})

\bibitem{FilippovHeinosaariLeppajarvi-compatibility}
Filippov S~N, Heinosaari T and Lepp{\"{a}}j{\"{a}}rvi L 2017 {\em Physical
  Review A\/} {\bf 95} 032127 (\textit{Preprint} \eprint{1609.08416})

\bibitem{Jencova-incomaptibility}
Jen{\v{c}}ov{\'{a}} A 2018 {\em Physical Review A\/} {\bf 98} 012133
  (\textit{Preprint} \eprint{1705.08008})

\bibitem{BluhmJencovaNechita-incomaptibility}
Bluhm A, Jen{\v{c}}ov{\'{a}} A and Nechita I 2020 {Incompatibility in general
  probabilistic theories, generalized spectrahedra, and tensor norms}
  (\textit{Preprint} \eprint{2011.06497})

\bibitem{UolaMoroderGuhne-steering}
Uola R, Moroder T and G{\"{u}}hne O 2014 {\em Physical Review Letters\/} {\bf
  113} 160403 (\textit{Preprint} \eprint{1407.2224})

\bibitem{QuintinoVertesiBrunner-steering}
Quintino M~T, V{\'{e}}rtesi T and Brunner N 2014 {\em Physical Review
  Letters\/} {\bf 113} 160402 ISSN 0031-9007 (\textit{Preprint}
  \eprint{1406.6976})

\bibitem{BrunnerCavalcantiPironioScaraniWehner-BellNonlocality}
Brunner N, Cavalcanti D, Pironio S, Scarani V and Wehner S 2014 {\em Reviews of
  Modern Physics\/} {\bf 86} 419--478 (\textit{Preprint} \eprint{1303.2849})

\bibitem{TavakoliUola-contextuality}
Tavakoli A and Uola R 2020 {\em Physical Review Research\/} {\bf 2} 013011
  (\textit{Preprint} \eprint{1905.03614})

\bibitem{SelbySchmidWolfeSainzKunjwalSpekkens-contextuality}
Selby J~H, Schmid D, Wolfe E, Sainz A~B, Kunjwal R and Spekkens R~W 2021
  (\textit{Preprint} \eprint{2106.09045})

\bibitem{FilippovGudderHeinosaariLeppajarvi-restrictions}
Filippov S~N, Gudder S, Heinosaari T and Lepp{\"{a}}j{\"{a}}rvi L 2020 {\em
  Foundations of Physics\/} {\bf 50} 850--876 (\textit{Preprint}
  \eprint{1912.08538})

\bibitem{DonadiHossenfelder-superdeterminism}
Donadi S and Hossenfelder S 2020 {A Toy Model for Local and Deterministic
  Wave-function Collapse} (\textit{Preprint} \eprint{2010.01327})

\bibitem{HossenfelderPalmer-superdeterminism}
Hossenfelder S and Palmer T 2020 {\em Frontiers in Physics\/} {\bf 8}
  (\textit{Preprint} \eprint{1912.06462})

\bibitem{Hardy-timeSymmetry}
Hardy L 2021 {Time Symmetry in Operational Theories} (\textit{Preprint}
  \eprint{2104.00071})

\bibitem{CarmeliHeinosaariToigo-incWitness}
Carmeli C, Heinosaari T and Toigo A 2019 {\em Physical Review Letters\/} {\bf
  122} 130402 (\textit{Preprint} \eprint{1812.02985})

\bibitem{Plavala-simplex}
Pl{\'{a}}vala M 2016 {\em Physical Review A\/} {\bf 94} 042108
  (\textit{Preprint} \eprint{1608.05614})

\bibitem{Kuramochi-simplex}
Kuramochi Y 2020 {\em Positivity\/} (\textit{Preprint} \eprint{1912.00563})

\bibitem{BuschHeinosaariSchultzStevens-compatibility}
Busch P, Heinosaari T, Schultz J and Stevens N 2013 {\em Europhysics Letters\/}
  {\bf 103} 10002 (\textit{Preprint} \eprint{1210.4142})

\bibitem{JencovaPlavala-maxInc}
Jen{\v{c}}ov{\'{a}} A and Pl{\'{a}}vala M 2017 {\em Physical Review A\/} {\bf
  96} 022113 (\textit{Preprint} \eprint{1703.09447})

\bibitem{UolaCostaNguyenGuhne-steering}
Uola R, Costa A~C~S, Nguyen H~C and G{\"{u}}hne O 2020 {\em Reviews of Modern
  Physics\/} {\bf 92} 015001 (\textit{Preprint} \eprint{1903.06663})

\bibitem{WisemanDohertyJones-nonlocal}
Wiseman H~M, Jones S~J and Doherty A~C 2007 {\em Physical Review Letters\/}
  {\bf 98} 140402 (\textit{Preprint} \eprint{0612147})

\bibitem{BarnumBarrettLeiferWilce-noBroadcastingPRL}
Barnum H, Barrett J, Leifer M and Wilce A 2007 {\em Physical Review Letters\/}
  {\bf 99} 240501 (\textit{Preprint} \eprint{0707.0620})

\bibitem{AubrunLamiPalazuelosPlavala-cones}
Aubrun G, Lami L, Palazuelos C and Plavala M 2019 Entangleability of cones
  (\textit{Preprint} \eprint{1911.09663})

\bibitem{BluhmJencovaNechita-spectrahedra}
Bluhm A, Jen{\v{c}}ov{\'{a}} A and Nechita I 2020 {Incompatibility in general
  probabilistic theories, generalized spectrahedra, and tensor norms}
  (\textit{Preprint} \eprint{2011.06497})

\bibitem{BeckmanGottesmanNielsenPreskill-channels}
Beckman D, Gottesman D, Nielsen M~A and Preskill J 2001 {\em Phys. Rev. A\/}
  {\bf 64} 052309 ISSN 1050-2947 (\textit{Preprint} \eprint{0102043})

\bibitem{HobanSainz-channels}
Hoban M~J and Sainz A~B 2018 {\em New J. Phys.\/} {\bf 20} 053048
  (\textit{Preprint} \eprint{1708.00750})

\bibitem{Crepeau-CHSH}
Crépeau C, Li J and Yang N 2017 private communication

\bibitem{PlavalaZiman-PRbox}
Pl{\'{a}}vala M and Ziman M 2020 {\em Physics Letters A\/} {\bf 384} 126323
  (\textit{Preprint} \eprint{1708.07425})

\bibitem{JencovaPlavala-PRbox}
Jen{\v{c}}ov{\'{a}} A and Pl{\'{a}}vala M 2020 {\em Physical Review A\/} {\bf
  102} 042208 (\textit{Preprint} \eprint{1907.08933})

\bibitem{Plavala-channels}
Pl{\'{a}}vala M 2017 {\em Physical Review A\/} {\bf 96} 052127
  (\textit{Preprint} \eprint{1707.08650})

\bibitem{HeinosaariKiukasReitznerSchultz-incompatibilityBreakingChannels}
Heinosaari T, Kiukas J, Reitzner D and Schultz J 2015 {\em Journal of Physics
  A: Mathematical and Theoretical\/} {\bf 48} 435301

\end{thebibliography}

\end{document}